\newtheorem{theorem}{Theorem}
\newtheorem{subsec:coding}{subsec:coding}
\newtheorem{fact}{Fact}
\newtheorem{lemma}{Lemma}
\newcommand{\ls}[1]  %% 1 in brackets means \ls takes 1 argument
   {\dimen0=\fontdimen6\the=#1\dimen0
    \advance\lineskip.5\fontdimen5\the\lineskip-\dimen0
    \lineskiplimit=.9\lineskip
    \baselineskip=\lineskip
    \advance\baselineskip\dimen0
    \normallineskip\lineskip
    \normallineskiplimit\lineskiplimit
    \normalbaselineskip\baselineskip
    \ignorespaces
   }
\begin{document}

\title{On Scalable Video Streaming over Cognitive Radio Cellular and Ad Hoc Networks}

\author{
\authorblockN{Donglin Hu \ \ and \ \ Shiwen Mao \\}
\authorblockA{Department of Electrical and Computer Engineering  \\
Auburn University, Auburn, AL, USA} 
}

\maketitle

\pagestyle{plain}\thispagestyle{plain}

%\ls{0.89}

\begin{abstract}
Video content delivery over wireless networks is expected to grow drastically in the coming years. In this paper, we investigate the challenging problem of video over cognitive radio (CR) networks. Although having high potential, this problem brings about a new level of technical challenges. After reviewing related work, we first address the problem of video over infrastructure-based CR networks, and then extend the problem to video over non-infrastructure-based ad hoc CR networks. We present formulations of cross-layer optimization problems as well as effective algorithms to solving the problems. The proposed algorithms are analyzed with respect to their optimality and validate with simulations.  
\end{abstract}

%-------------------------------------------------------------------
% Introduction
\section{Introduction}
Video content delivery over wireless networks is expected to grow drastically in the coming years. The compelling need for ubiquitous video content access will significantly stress the capacity of existing and future wireless networks. To meet this critical demand, the Cognitive Radio (CR) technology provides an effective solution that can effectively exploit co-deployed networks and aggregate underutilized spectrum for future video-aware wireless networks.

The high potential of CRs has attracted substantial interest. The mainstream CR research has focused on developing effective spectrum sensing and access techniques (eg., see~\cite{Zhao07a, Zhao09}). Although considerable advances have been achieved, the important problem of guaranteeing application performance has not been well studied.  We find video streaming can make excellent use of the enhanced spectrum efficiency in CR networks.  Unlike data, where each bit should be delivered, video is loss-tolerant and rate-adaptive~\cite{Mao05-wcomm, Mao05-icc}.  They are highly suited for CR networks, where the available bandwidth depends on primary user transmission behavior.  Graceful degradation of video quality can be achieved as spectrum opportunities evolve over time.

CR is an evolving concept with various network models and levels of cognitive functionality~\cite{Zhao07a, Zhao09}. IEEE 802.22 Wireless Regional Area Networks (WRAN) is the first CR standard for reforming broadcast TV bands, where a base station (BS) controls medium access for customer-premises equipments (CPEs)~\cite{802.22STD}. Therefore, we first consider multicasting scalable videos in such an infrastructure-based CR network. The spectrum consists of multiple channels, each allocated to a primary network. The CR network is co-located with the primary networks, where a CR BS seeks spectrum opportunities for multicasting multiple video streams, each to a group of secondary subscribers. The problem is to exploit spectrum opportunities for minimizing video distortion, while keeping the collision rate with primary users below a prescribed threshold. We consider scalable video coding, such as fine-grained-scalability (FGS) and medium grain scalable (MGS) videos~\cite{Schaar03, Wien07}. We model the problem of CR video multicast over the licensed channels as a mixed integer nonlinear programming (MINLP) problem, and then develop a sequential fixing algorithm and a greedy algorithm to solve the MINLP, while the latter has a low computational complexity and a proved optimality gap~\cite{Hu09IC}.

We then tackle the problem of video over multi-hop CR networks, e.g., a wireless mesh network with CR-enabled nodes. This problem is more challenging than the problem above due to the lack of infrastructure support. We assume each secondary user is equipped with two transceivers. To model and guarantee end-to-end video performance, we adopt the amplify-and-forward approach for video data transmission, which is well-studied in the context of cooperative communications~\cite{Laneman04}. This is equivalent to setting up a ``virtual tunnel'' through a multi-hop multi-channel path. The challenging problem, however, is how to set up the virtual tunnels, while the available channels at each relay evolve over time due to primary user transmissions. 
The formulated MINLP problem is first solved using a centralized sequential fixing algorithm, which provides upper and lower bounds for the achievable video quality. We then apply dual decomposition to develop a distributed algorithm and prove its optimality as well as the convergence condition~\cite{Hu10TW}.  

The rest of the paper is organized as follows. We review related work in Section~\ref{sec:cr_video_work} and present preliminaries in Section~\ref{sec:cr_video_sys}. We examine video over infrastructure-based CR networks in Section~\ref{sec:cr_video_mcast} and over multi-hop CR networks in Section~\ref{sec:cr_video_mhop}. We concludes the paper in Section~\ref{sec:cr_video_conc} with a discussion of open problems.

%-------------------------------------------------------------------
% Background and Related Work
\section{Background and Related Work}\label{sec:cr_video_work}
The high potential of CRs has attracted considerable interest form both industry, government and academia~\cite{Akyildiz06, Zhao07a}. The mainstream CR research has been focused on spectrum sensing and dynamic spectrum access issues. For example, the impact of spectrum sensing errors on the design of spectrum access schemes has been addressed in several papers~\cite{Chen08,Urgaonkar09,Shu09}. The approach of iteratively sensing a selected subset of available channels has been developed in the design of CR MAC protocols~\cite{Su08,Zhao08}. The optimal trade-off between the two kinds of sensing errors is investigated comprehensively and addressed in depth in~\cite{Chen08}.

The important issue of QoS provisioning in CR networks has been studied only in a few papers~\cite{Su08,Fattahi07}, where the objective is still focused on the so-called ``network-centric'' metrics such as maximum throughput and delay~\cite{Su08,Urgaonkar09}. In~\cite{Urgaonkar09},  an interesting delay throughput trade-off for a multi-cell cognitive radio network is derived, while the goal of primary user protection is achieved by stabilizing a virtual ``collision queue''. In~\cite{Fattahi07}, a game-theoretic framework is described for resource allocation for multimedia transmissions in spectrum agile wireless networks. In this interesting work, each wireless station participates in a resource management game, which is coordinated by a network moderator. A mechanism-based resource management scheme determines the amount of transmission opportunities to be allocated to various users on different frequency bands such that certain global system metrics are optimized. 

%Recently, the end-to-end user-perceived video quality experienced by CR users becomes more and more important. 
The problem of video over CR networks has been addressed only in a few recent papers. In~\cite{Shiang08}, a priority virtual queue model is adopted for wireless CR users to select channel and maximize video qualities. In~\cite{Luo11}, the impact of system parameters residing in different network layers are jointly considered to achieve the best possible video quality for CR users. The problem is formulated as a Mini-Max problem and solved with a dynamic programming approach. In~\cite{Ali09}, Ali and Yu jointly optimize video parameter with spectrum sensing and access strategy. A rate-distortion model is adopted to optimize the intra-mode selection and source-channel rate with a partially observable Markov decision process (POMDP) formulation. In~\cite{Guan11}, video encoding rate, power control, relay selection and channel allocation are jointly considered for video over cooperative CR networks. The problem is formulated as a mixed-integer nonlinear problem and solved by a solution algorithm based on a combination of the branch and bound framework and convex relaxation techniques. 

Video multicast, as one of the most important multimedia services, has attracted considerable interest from the research community. Layered video multicast has been researched in the mobile ad hoc networks~\cite{Mao06,Wei07} and infrastructure-based wireless networks~\cite{Deb08,Schaar03}. A greedy algorithm is presented in~\cite{Deb08} for layered video multicast in WiMAX networks with a proven optimality gap.

A few recent works~\cite{Hou07,Hou08,Feng09} have studied multi-hop CR networks. The authors formulate cross-layer optimization problem considering factors from the PHY up to the transport layer. The dual decomposition technique~\cite{Palomar06, Bertsekas99} is adopted to develop distributed algorithm. We choose similar methodology in our work and apply it to the more challenging problem of real-time video streaming. 

%-------------------------------------------------------------------
% System Model and Preliminaries
\section{System Model and Preliminaries}\label{sec:cr_video_sys}
%\input{chapter3/system.tex}
%-------------------------------------------------------------------
\subsection{Primary Network}
%-------------------------------------------------------------------

We consider a spectrum band consisting of $M$ orthogonal channels with identical bandwidth~\cite{Jia08,Hu11MC}. We assume that the $M$ channels are allocated to $K$ primary networks, which cover different service areas. A primary network can use any of the $M$ channels without interfering with other primary networks.  We further assume that the primary systems use a synchronous slot structure as in prior work~\cite{Zhao07a, Su08}.  Due to primary user transmissions, the occupancy of each channel evolves following a discrete-time Markov process, as validated by recent measurement studies~\cite{Zhao07a, Su08, Motamedi07}.

In primary network $k$, the status of channel $m$ in time slot $t$ is denoted by $S_m^k(t)$ with idle (i.e., $S_m^k(t)=0$) and busy (i.e., $S_m^k(t)=1$) states.  Let $\lambda_m^k$ and $\mu_m^k$ be the transition probability of remaining in state $0$ and that from state $1$ to $0$, respectively, for channel $m$ in primary network $k$. The utilization of channel $m$ in primary network $k$, denoted by $\eta_m^k = \Pr(S_m^k=1)$, is
\begin{eqnarray}\label{eq:ProbBusy1}
  \eta_m^k = \lim_{T \to \infty} \frac{1}{T} 
             \mbox{$\sum_{t=1}^T$} S_m^k(t) 
           = \frac{1-\lambda_m^k}{1-\lambda_m^k+\mu_m^k}.
\end{eqnarray}

Note that in infrastructure-based CR networks and cooperative CR networks, we assume there is only one $K=1$ primary network. In infrastructure-based CR networks introduced in the section \ref{sec:cr_video_mcast}, we adopt $N$ as the number of licensed channels since $M$ is denoted as the number of modulation-coding schemes. 

%-------------------------------------------------------------------
\subsection{Infrastructure-based CR Networks}
%-------------------------------------------------------------------
As shown in Fig.~\ref{fig:ntwarch}, 
we consider a CR base station multicasts $G$ real-time videos to $G$ multicast groups, each of which have $N_g$ users, $g=1, 2, \cdots, G$.  The base station seeks spectrum opportunities in the $N$ channels to serve CR users.  In each time slot $t$, the base station selects a set of channels $\mathcal{A}_1(t)$ to sense and a set of channels $\mathcal{A}_2(t)$ to access. Without loss of generality, the base station has $|\mathcal{A}_1(t)|$ transceivers such that it can sense $|\mathcal{A}_1(t)|$ channels simultaneously. Note that a time slot and channel combination, termed a {\em tile}, is the minimum unit for resource allocation.

%---------------------------------------------------------------
\begin{figure} [!t]%[ht]
\centering
\includegraphics[width=4.3in]{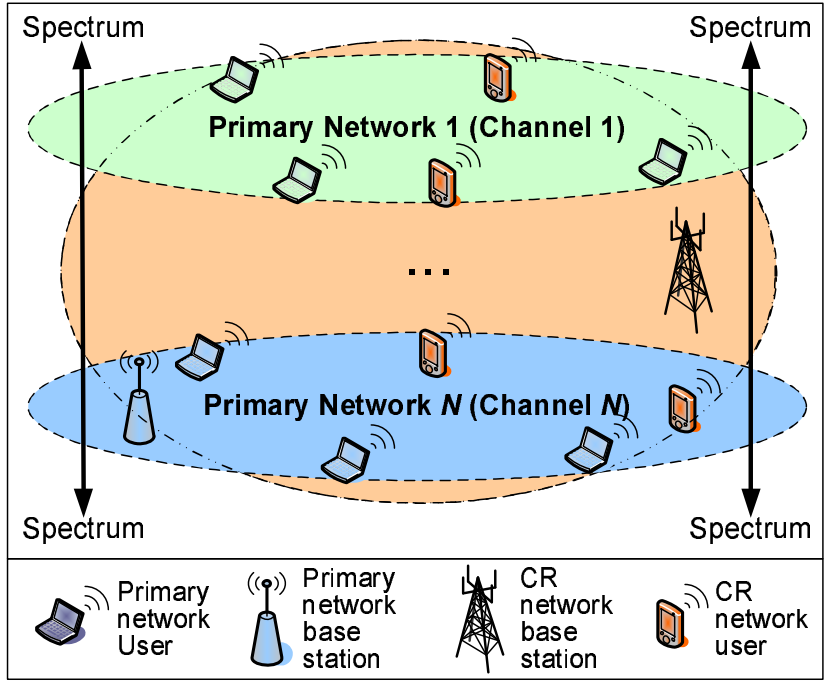}
\caption{An infrastructure-based CR network collocated with $N$ primary networks.}
\label{fig:ntwarch}
\end{figure}
%---------------------------------------------------------------

We adopt the same time-slot structure as in~\cite{Zhao07a, Zhao08}. 
, which is illustrated in Fig.~\ref{fig:SlotStructure1}.  
At the beginning of each time slot, the base station senses channels in $\mathcal{A}_1(t)$ and then chooses a set of available channels for opportunistic transmissions based on sensing results. After a successful transmission, the base station will receive an ACK from the user with the highest SNR in the target multicast group. Without loss of generality, we assume that each CR network user can access all the available channels with the channel bonding/aggregation techniques~\cite{Corderio06, Mahmoud09}. 
%~\cite{Zhao07a, Berthold05, Tang05}. 
%When there are multiple licensed channels, we first consider the case where the channel bonding/aggregation techniques are used by the transmitters and CR users [13], [14].

%-------------------------------------------------------------------
\begin{figure} [!t]
\centering
\includegraphics[width=4.3in]{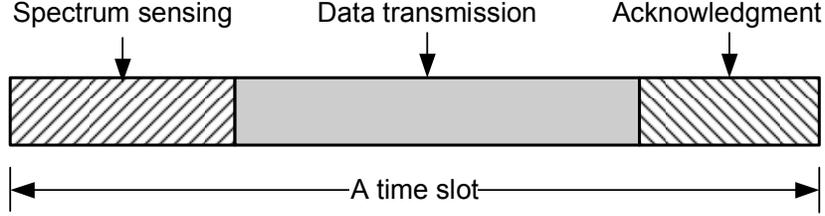}
\caption{The structure of a time slot.}
\label{fig:SlotStructure1}
\end{figure}
%-------------------------------------------------------------------

%-------------------------------------------------------------------
\subsection{Multi-hop CR networks}
%-------------------------------------------------------------------

As shown in Fig.~\ref{fig:Scenario}, we also consider a multi-hop CR network that is co-located with the primary networks, within which $\mathcal{S}$ real-time videos are streamed among $N$ CR nodes.  Let $\mathcal{U}^k$ denote the set of CR nodes that are located within the coverage of primary network $k$.  A video session $l$ may be relayed by multiple CR nodes if source $z_l$ is not a one-hop neighbor of destination $d_l$.  We assume a {\em common control channel} for the CR network~\cite{Su08}.  We also assume the timescale of the primary channel process (or, the time slot durations) is much larger than the broadcast delays on the control channel, such that feedbacks of channel information can be received at the source nodes in a timely manner. 

%-------------------------------------------------------------------
\begin{figure} [!t]
\centering
\includegraphics[width=4.5in]{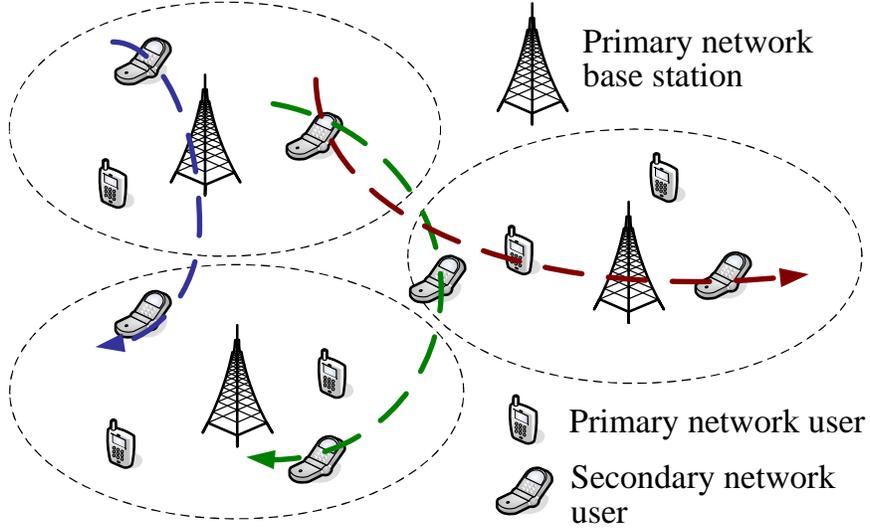}
\caption{Illustration of the multi-hop video CR network architecture.}
\label{fig:Scenario}
\end{figure}
%-------------------------------------------------------------------

The time slot structure is the same as that in infrastructure-based CR networks. In the sensing phase, one transceiver of a CR node is used to sense one of the $M$ channels, while the other is tuned to the control channel to exchange channel information with other CR users.  Each video source computes the optimal path selection and channel scheduling based on sensing results.  In the transmission phase, the channels assigned to a video session $l$ at each link along the path form a virtual ``tunnel'' connecting source $z_l$ and destination $d_l$. 
As illustrated in Fig.~\ref{fig:cutth},
each node can use one or more than one channels to communicate with other nodes using the channel bonding/aggregation techniques~\cite{Corderio06, Mahmoud09}.  When multiple channels are available on all the links along a path, multiple tunnels can be established and used simultaneously for a video session. In the acknowledgment phase, the destination sends ACK to the source for successfully received video packets through the same tunnel. 

We adopt amplify-and-forward for video transmission~\cite{Laneman04}.  During the transmission phase, one transceiver of the relay node receives video data from the upstream node on one channel, while the other transceiver of the relay node amplifies and forwards the data to the downstream node on a different, orthogonal channel.  There is no need to store video packets at the relay nodes.  Error detection/correction will be performed at the destination node.  As a result, we can transmit through the tunnel a block of video data with minimum delay and jitter in one time slot.

%-------------------------------------------------------------------
\begin{figure} [!t]
\centering
\includegraphics[width=5.0in]{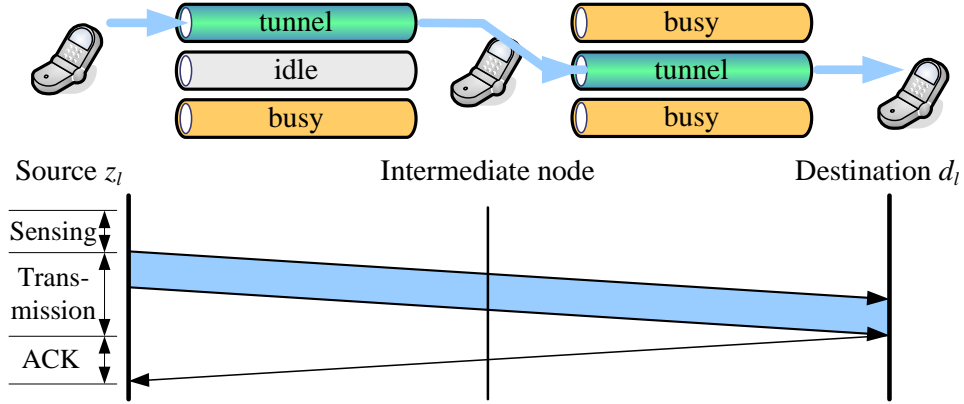}
\caption{The cut-through switching model for video data.}
\label{fig:cutth}
\end{figure}
%-------------------------------------------------------------------

%-------------------------------------------------------------------
\subsection{Spectrum Sensing \label{subsec:ssen}}
%-------------------------------------------------------------------

Two types of sensing errors may occur during the sensing process. A {\em false alarm} may lead to waste a spectrum opportunity and a {\em miss detection} may causes collision with primary users. In a multi-hop CR network, the sensing results from various users may be different. Denote $H_0$ as the hypothesis that channel $m$ in primary network $k$ is idle, and $H_1$ the hypothesis that channel $m$ in primary network $k$ is busy in time slot $t$.  The conditional probability that channel $m$ is available in primary network $k$, denoted by $a_m^k(t)$, can be derived as,  
%\begin{figure*}
%\begin{eqnarray}\label{eq:AvailProb}
%	& & \hspace{-0.2in} a_m^k(t) = \Pr(H_0|W_i^m=\theta_i^m, \; i \in \mathcal{U}_m^k,
%	          \pi_m^k) 
%	= \frac{\Pr(W_i^m=\theta_i^m,i \in \mathcal{U}_m^k|H_0,\pi_m^k)\Pr(H_0|\pi_m^k)}{\sum_{s\in\{0,1\}} \Pr(W_i^m=\theta_i^m,i \in \mathcal{U}_m^k|H_s,\pi_m^k)\Pr(H_s|\pi_m^k)} \nonumber\\
%	& & \hspace{-0.4in} = \frac {\Pr(H_0|\pi_m^k)\prod_{i \in \mathcal{U}_m^k}\Pr(W_i^m=\theta_i^m|H_0,\pi_m^k)}{\sum_{s\in\{0,1\}} \Pr(H_s|\pi_m^k)\prod_{i \in \mathcal{U}_m^k}\Pr(W_i^m=\theta_i^m|H_s,\pi_m^k)} 
%	= \frac {\Pr(H_0|\pi_m^k)\prod_{i \in \mathcal{U}_m^k}\Pr(W_i^m=\theta_i^m|H_0)}{\sum_{s\in\{0,1\}} \Pr(H_s|\pi_m^k)\prod_{i \in \mathcal{U}_m^k}\Pr(W_i^m=\theta_i^m|H_s)} \nonumber \\
%	& & \hspace{-0.4in} = \left[ 1+\frac{\Pr(H_1|\pi_m^k)}{\Pr(H_0|\pi_m^k)}\prod_{i \in \mathcal{U}_m^k} 
%	\frac{\Pr(W_i^m=\theta_i^m|H_1)}{\Pr(W_i^m=\theta_i^m|H_0)} \right]^{-1} 
%	= \left[ 1 + \left( \varphi_m^k \right)^{u_m^k} \left( \phi_m^k
%	         \right)^{|\mathcal{U}_m^k|-u_m^k}
%	      \frac{\Pr(H_1|\pi_m^k)}{\Pr(H_0|\pi_m^k)} \right]^{-1}.
%\end{eqnarray}
%\end{figure*}
\begin{eqnarray}\label{eq:AvailProb1}
	a_m^k(t) &=& \Pr(H_0|W_i^m=\theta_i^m, \; i \in \mathcal{U}_m^k, \pi_m^k) 
	         \nonumber \\ 
%	&=& \frac{\Pr(W_i^m=\theta_i^m,i \in \mathcal{U}_m^k|H_0, \pi_m^k) \Pr(H_0|\pi_m^k)}{\sum_{s\in\{0,1\}} \Pr(W_i^m=\theta_i^m,i \in \mathcal{U}_m^k|H_s,\pi_m^k)\Pr(H_s|\pi_m^k)} \nonumber \\
%	&=& \frac {\Pr(H_0|\pi_m^k)\prod_{i \in \mathcal{U}_m^k}\Pr(W_i^m=\theta_i^m|H_0,\pi_m^k)}{\sum_{s\in\{0,1\}} \Pr(H_s|\pi_m^k)\prod_{i \in \mathcal{U}_m^k}\Pr(W_i^m=\theta_i^m|H_s,\pi_m^k)} \nonumber \\
%	&=& \frac {\Pr(H_0|\pi_m^k)\prod_{i \in \mathcal{U}_m^k}\Pr(W_i^m=\theta_i^m|H_0)}{\sum_{s\in\{0,1\}} \Pr(H_s|\pi_m^k)\prod_{i \in \mathcal{U}_m^k}\Pr(W_i^m=\theta_i^m|H_s)} \nonumber \\
%	&=& \left[ 1+\frac{\Pr(H_1|\pi_m^k)}{\Pr(H_0|\pi_m^k)}\prod_{i \in \mathcal{U}_m^k} 
%	\frac{\Pr(W_i^m=\theta_i^m|H_1)}{\Pr(W_i^m=\theta_i^m|H_0)} \right]^{-1}  \nonumber \\
	&=& \left[ 1 + \left( \varphi_m^k \right)^{u_m^k} \left( \phi_m^k
	         \right)^{|\mathcal{U}_m^k|-u_m^k}
	      \frac{\Pr(H_1|\pi_m^k)}{\Pr(H_0|\pi_m^k)} \right]^{-1}.
\end{eqnarray}
where $\theta_i^m$ represents a specific sensing result (0 or 1), $\mathcal{U}_m^k$ is the subset of users in $\mathcal{U}^k$ (i.e., the set of CR nodes that are located within the coverage of primary network $k$) that sense channel $m$, $u_m^k$ is the number of users in $\mathcal{U}_m^k$ observing channel $m$ is idle, $\pi_m^k$ represents the history of channel $m$ in primary network $k$,
%\footnote{$\pi_m^k$ represents the availability of channel $m$ in primary network $k$ in the previous time slot.  If the channel was used in that time slot, $\pi_m^k$ can be readily determined as 0 or 1, since the channel state was known (i.e., with or without ACKs). Otherwise, $\pi_m^k$ can be estimated in the form of $a_m^k (t-1)$ as in~(\ref{eq:AvailProb}).} 
and $\varphi_m^k$ and $\phi_m^k$ are defined as: 
\begin{eqnarray}\label{eq:DefVarPhi}
\left\{
\begin{array}{ll}
\varphi_m^k=\frac{P(W_i^m=0|H_1)}{P(W_i^m=0|H_0)}=\frac{\delta_m}{1-\epsilon_m}, &
	  \mbox{when } \theta_i^m=0 \\
\phi_m^k=\frac{P(W_i^m=1|H_1)}{P(W_i^m=1|H_0)}=\frac{1-\delta_m}{\epsilon_m}, & 
	  \mbox{when } \theta_i^m=1 .
\end{array}\right.
\end{eqnarray}
%The third equality is due to independent sensing processes.  The fourth equality is because sensing processes are independent of channel history.  
Based on the Markov chain channel model, we have (\ref{eq:OneStep}), which can be recursively expanded:
\begin{eqnarray} \label{eq:OneStep}
  \left\{ \begin{array}{l}
    \Pr(H_0|\pi_m^k) = \lambda_m^k a_m^k(t-1) + \mu_m^k \left[ 1-a_m^k(t-1) \right] \\
    \Pr(H_1|\pi_m^k) = 1-\Pr(H_0|\pi_m^k).  
  \end{array} \right. 
\end{eqnarray}  

%-------------------------------------------------------------------
\subsection{Video Performance Measure}\label{sec:VideoRate}
%-------------------------------------------------------------------

Both FGS and MGS videos are highly suited for dynamic CR networks.  With FGS or MGS coding, each video $l$ is encoded into one base layer with rate $R_l^b$ and one enhancement layer with rate $R_l^e$.  The total bit rate for video $l$ is
  $R_l=R_l^b+R_l^e$.

We consider peak-signal-noise-ratio (PSNR) (in dB) of reconstructed videos.  As in prior work~\cite{Schaar03}, the average PSNR of video $l$, denoted as $Q_l$, can be estimated as:
\begin{equation}\label{eq:RDfun}
  Q_l(R_l) = Q_l^b + \beta_l(R_l - R_l^b) = Q_l^0 + \beta_l R_l,
\end{equation}
where 
$Q_l^b$ is the resulting PSNR when the base layer is decoded alone, $\beta_l$ a constant depending on the video sequence and codec setting, and $Q_l^0 = Q_l^b - \beta_l R_l^b$.  We verified the model (\ref{eq:RDfun}) with several test video sequences using the MPEG-4 FGS codec and the H.264/SVC MGS codec and found it is highly accurate. 
%The results are omitted for brevity.

Due to the real-time nature, we assume that each {\em group of pictures} (GOP) must be delivered during the next GOP window, which consists of $N_G$ time slots.  Beyond that, overdue data from the current GOP will be useless and will be discarded. In infrastructure-based network, $G$ video stream are multicast to $G$ groups of CR user, so we choose the group index $g$ instead of video session index $l$.

%-------------------------------------------------------------------
% Video over infrastructure based CR networks
\section{Video over Infrastructure Based CR Networks}\label{sec:cr_video_mcast}
In this section, we examine the problem of video over infrastructure-based CR networks. We consider cross-layer design factors such as scalable video coding, spectrum sensing, opportunistic spectrum access, primary user protection, scheduling, error control and modulation. We propose efficient optimization and scheduling algorithms for highly competitive solutions, and prove the complexity and optimality bound of the proposed greedy algorithm.

%-------------------------------------------------------------------
\subsection{Network Model}
%-------------------------------------------------------------------

%-------------------------------------------------------------------
\subsubsection{Spectrum Access \label{subsec:osa}}
%-------------------------------------------------------------------

At the beginning of each time slot $t$, the CR BS senses the $M$ channels and compute $a_n(t)$ for each channel $n$. Based on spectrum sensing results, the base station determines which channels to access for video streaming. We adopt an opportunistic spectrum access approach, aiming to exploit unused spectrum while probabilistically bounding the interference to primary users.  

Let $\gamma_n \in (0, 1)$ be the {\em maximum allowed collision probability} with primary users on channel $n$, and $p^{tr}_n(t)$ the {\em transmission probability} on channel $n$ for the base station in time slot $t$.  The probability of collision caused by the base station should be kept below $\gamma_n$, i.e., $p^{tr}_n(t) \left[ 1 - a_n(t) \right] \le \gamma_n$.  In addition to primary user protection, another important objective is to exploit unused spectrum as much as possible.  The transmission probability can be determined by jointly considering both objectives, as
\begin{equation}\label{eq:ptran}
  p^{tr}_n(t) = \left\{ \begin{array} {ll}
         \min \left\{ 1, \frac{\gamma_n}{1-a_n(t)} \right\}, & \mbox{ if }\; 0 \leq a_n(t) < 1 \\
         1, & \mbox{ if }\; a_n(t) =1. 
                        \end{array} \right.                        
\end{equation}
If $p^{tr}_n(t)=1$, channel $n$ will be accessed deterministically. If $p^{tr}_n(t) = \gamma_n / [1-a_n(t)] <1$, channel $n$ will be accessed opportunistically with probability $p^{tr}_n(t)$.

%-------------------------------------------------------------------
\subsubsection{Modulation-Coding Schemes \label{subsec:mcs}}
%-------------------------------------------------------------------

At the PHY layer, we consider various modulation and channel coding combination schemes.  Without loss of generality, we assume several choices of modulation schemes, such as QPSK, 16-QAM and 64-QAM, combined with several choices of forward error correction (FEC) schemes, e.g., with rates 1/2, 2/3, and 3/4. We consider $M$ unique combinations of modulation and FEC schemes, termed {\em Modulation-Coding} (MC) schemes, in this paper.

Under the same channel condition, different MC schemes will achieve different data rates and symbol error rates. Adaptive modulation and channel coding allow us to exploit user channel variations to maximize video data rate under a given residual bit error rate constraint. When a user has a good channel, it should adopt an MC scheme that can support a higher data rate. Conversely, it should adopt a low-rate MC scheme when the channel condition is poor.  Let $\{MC_m\}_{m=1,\cdots,M}$ be the list of available MC schemes indexed according to their data rates in the increasing order.  We assume slow fading channels with coherence time larger than a time slot.  Each CR user measures its own channel and feedbacks measurements to the base station when its channel quality changes.  At the beginning of a time slot, the base station is able to collect the number $n_{g,m}$ of users in each multicast group $g$ who can successfully decode $MC_m$ signals for $m=1, 2, \cdots, M$. 

Since the base layer carries the most important data, the most reliable MC scheme $MC_{b(g)}$ should be used, where $b(g) = \max_i \{ i : n_{g,i} = N_g \}$, for all $g$.  Without loss of generality, we assume that the base layer is always transmitted using $MC_1$. If a user's channel is so poor that it cannot decode the $MC_1$ signal, we consider it disconnected from the CR network.  We further divide the enhancement layer into $M$ sub-layers, where sub-layer $m$ has rate $R_{g,m}^e$ and uses $MC_m$.  Assuming that $MC_m$ can carry $b_{g,m}$ bits of video $g$ in one tile, we denote the number of tiles for sub-layer $m$ of video $g$ as $l_{g,m} \geq 0$. We have 
\begin{equation}
R_g^e = \sum_{m=1}^M R_{g,m}^e = \sum_{m=1}^M b_{g,m} l_{g,m}.
\end{equation}

%-------------------------------------------------------------------
\subsubsection{Proportional Fair Allocation \label{subsec:fairness}}
%-------------------------------------------------------------------

%For data communications, {\em proportional fairness} is a widely adopted measure, which can be achieved by maximizing the sum of logarithms of user rates (i.e., utilities)~\cite{Kelly98}.  
Since we consider video quality in this paper, we define the utility for user $i$ in group $g$ as $U_{g,i} = \log Q_{g,i} = \log \left( Q^b_g+\beta_g R^e_g(i) \right)$, where $R^e_g(i)$ is the received enhancement layer rate of user $i$ in group $g$.

The total utility for group $g$ is $U_g = \sum_{i=1}^{N_g} U_{g,i}$.  Intuitively, a lower layer should use a lower (i.e., more reliable) MC scheme.  This is because if a lower layer is lost, a higher layer cannot be used at the decoder even if it is correctly received.  Considering the user classification based on their MC schemes, we can rewrite $U_g$ as follows~\cite{Deb08}:
\begin{equation}
  U_g = \sum_{k=1}^M(n_{g,k}-n_{g,k+1})\log \left( Q^b_g+\beta_g \sum_{m=1}^k R_{g,m}^e \right),
\end{equation}
where $n_{g,M+1}=0$.  The utility function of the entire CR video multicast system is 
\begin{equation}
U = \sum_{g=1}^G U_g.
\end{equation}
Maximizing $U$ will achieve {\em proportional fairness} among the video sessions~\cite{Kelly98}

%-------------------------------------------------------------------
\subsection{Optimized Video Multicast in CR Networks \label{sec:optimized}}
%-------------------------------------------------------------------

%-------------------------------------------------------------------
\subsubsection{Outline of the Proposed Approach \label{subsec:outline}}
%-------------------------------------------------------------------

As discussed, the CR video multicast problem is highly challenging since a lot of design choices are tightly coupled.  First, as users see different channels, such heterogeneity should be accommodated so that a user can receive a video quality commensurate to its channel quality. Second, we need to determine the video rates before transmission, which, however, depend on future channel evolution and choice of MC schemes.  Third, the trade-off between primary user protection and spectrum utilization should guide the scheduling of video packets to channels.  Finally, all the optimization decisions should be made in real-time.  Low-complexity, but efficient algorithms are needed, while theoretical optimality bounds would be highly appealing. 

To address heterogeneous user channels, we adopt FGS to produce a base layer with rate $R^b_g$ and an enhancement layer with rate $\bar{R}_g^e$. Without loss of generality, we assume $R^b_g$ is prescribed for an acceptable video quality, while $\bar{R}^e_g$ is set to a large value that is allowed by the codec.  During transmission, we determine the {\em effective rate} for each enhancement layer $R_g^e \leq \bar{R}^e_g$ depending on channel availability, sensing, and MC schemes.\footnote{The proposed approach can also be used for streaming stored FGS video.}  The optimal partition of the enhancement layer should be determined such that each sub-layer uses a different MC scheme.  

We determine the optimal partition of enhancement layers, the choices of MC schemes, and video packet scheduling as follows.  First, we solve the optimal partition problem for every GoP based on an estimated (i.e., average) number of available tiles $T_e$ in the next GoP window that can be used for the enhancement layer, using algorithm GRD1 with complexity $O(MGT_e)$. The tile allocations are then dynamically adjusted in each time slot according to more recent (and thus more accurate) channel status using algorithm GRD2, with complexity $O(MGK)$, where $K \ll T_e$.  Second, during each time slot, video packets are scheduled to the available channels such that the overall system utility is maximized. The TSA algorithm has complexity $O(N\log N)$. Both GRD2 and TSA have low complexity and are suitable for execution in each time slot. 

In real-time video, overdue packets generally do not contribute to improving the received quality.  We assume that the data from a GoP should be be delivered in the next GoP window consisting of $T_{GoP}$ time slots.\footnote{The proposed approach also works for the more general delay requirements that are multiple GoP windows.} Since the base layer is essential for decoding a video, we assume that the base layers of all the videos are coded using $MC_1$. For the $M$ sub-layers of the enhancement layer, a more important sub-layer will be coded using a more reliable (i.e., lower rate) MC scheme.  At the beginning of each GoP window, all the base layers are transmitted using the available tiles.  {\em Retransmissions} will be scheduled if no ACK is received for a base layer packet.  After the base layers are transmitted, we allocate the remaining available tiles in the GoP window for the enhancement layer. The same rule applies to the enhancement sub-layers, such that a higher sub-layer will be transmitted if and only if all the lower sub-layers are acknowledged. This is due to the decoding dependency of layered video. 
  
In each time slot $t$, the base station opportunistically access every channel $n$ with probability $p_n^{tr}(t)$ given in (\ref{eq:ptran}).  Specifically, for each channel $n$, the base station generates a random number $x_n(t)$, which is independent of the channel history $\theta_n(t)$ and uniformly distributed in [0,1].  If $x_n(t) \leq p_n^{tr}(t)$, the most important packet among those not ACKed in the previous GoP will be transmitted on channel $n$. If an ACK is received for this packet at the end of time slot $t$, this packet is successfully received by at least one of the users and will be removed from the transmission buffer. Otherwise, there is a collision with primary user and this packet will remain in the transmission buffer and will be retransmitted. 

In the following, we describe in detail the three algorithms.

%%%%%%%%%%%%%%%%%%%%%%%%%%%%%%%%%%%%%%%%%%%%%%%%%%%%%%%%%%%%%%%%%%%%%%%%%%%%%%%%%%%%
\subsubsection{Enhancement Layer Partitioning and Tile Allocation \label{subsec:part}}

As a first step, we need to determine the effective rate for each enhancement layer $R_g^e \leq \bar{R}^e_g$.  We also need to determine the optimal partition of each enhancement layer. Clearly, the solutions will be highly dependent on the channel availability processes and sensing results.  

Recall that the base layers are transmitted using $MC_1$ first in each GoP window. The {\em remaining} available tiles can then be allocated to the enhancement layers.  We assume that the number of tiles used for the enhancement layers in a GoP window, $T_e$, is known at the beginning of the GoP window.  For example, we can estimate $T_e$ by computing the total average ``idle'' intervals of all the $N$ channels based on the channel model, decreased by the number of tiles used for the base layers (i.e., $R^b_g/b_{g,1}$).  We then split the enhancement layer of each video $g$ into $M$ sub-layers, each occupying $l_{g,m}$ tiles when coded with $MC_m$,  $m=1, 2, \cdots, M$. 

Letting $\vec{l} = [l_{1,1}, l_{1,2}, \cdots, l_{1,M}, l_{2,1}, \cdots l_{G,M}]$ denote the {\em tile allocation vector}, we formulate an optimization problem OPT-Part as follows. 
\begin{eqnarray}
 \mbox{maximize:} && \hspace*{-0.2in} U(\vec{l})
  = \sum_{g=1}^{G}\sum_{k=1}^M(n_{g,k}-n_{g,k+1}) \times 
  %\nonumber \\
  %& & \;\;\;\;\;\;\;\;\;\;\;\;\;\;\; 
  \log \left[ Q^b_g+\beta_g \sum_{m=1}^k b_{g,m}l_{g,m} \right] \label{eq:allocat} \\
\mbox{subject to:} 
  && \hspace*{-0.2in} \sum_{g=1}^G\sum_{m=1}^M l_{g,m}\le T_e \label{eq:st1} \\
	&& \hspace*{-0.2in} \sum_{m=1}^M b_{g,m}l_{g,m} \le \bar{R}_g^e, \; g \in [1, \cdots, M] 
	       \label{eq:st2} \\
	&& \hspace*{-0.2in} l_{g,m} \geq 0, \; m \in [1, \cdots, M], g \in [1, \cdots, G]. 
	       \label{eq:st3}
\end{eqnarray}
OPT-Part is solved at the beginning of each GoP window to determine the optimal partition of the enhancement layer.  The objective is to maximize the overall system utility by choosing optimal values for the $l_{g,m}$'s.  We can derive the effective video rates as $R^e_g = \sum_{m=1}^M b_{g,m} l_{g,m}$.  The formulated problem is a MINLP problem, which is NP-hard~\cite{Deb08}.  In the following, we present two algorithms for computing near-optimal solutions to problem OPT-Part: (i) a {\em sequential fixing} (SF) algorithm based on a linear relaxation of (\ref{eq:allocat}), and (ii) a {\em greedy algorithm} {GRD1} with proven optimality gap.

%%%%%%%%%%%%%%%%%%%%%%%%%%%%%%%%%%%%%%%%%%%%%%%%%%%%%%%%%%%%%%%%%%%%%%%%%%%%%%%%%%%%
\paragraph{A Sequential Fixing Algorithm \label{subsubsec:sf}}

With this algorithm, the original MINLP is first linearized to obtain a linear programming (LP) relaxation. Then we iteratively solve the LP, while fixing one integer variable in every iteration~\cite{Hou08, Hou06}. 
%~\cite{Shi07}.  
We use the {\em Reformulation-Linearization Technique} (RLT) to obtain the LP relaxation~\cite{Kompella09}.  RLT is a technique that can be used to produce LP relaxations for a nonlinear, nonconvex polynomial programming problem. This relaxation will provide a tight upper bound for a maximization problem. Specifically, we linearize the logarithm function in (\ref{eq:allocat}) over some suitable, tightly-bounded interval using a polyhedral outer approximation comprised of a convex envelope in concert with several tangential supports. We further relax the integer constraints, i.e., allowing the $l_{g,m}$'s to take fractional values. Then we obtain an upper-bounding LP relaxation that can be solved in polynomial time. Due to lack of space, we refer interested readers to~\cite{Kompella09} for a detailed description of the technique. 

We next solve the LP relaxation iteratively.  During each iteration, we find the $l_{\hat{g},\hat{m}}$ which has the minimum value for $\left( \lceil l_{\hat{g},\hat{m}} \rceil - l_{\hat{g},\hat{m}} \right)$ or $\left( l_{\hat{g},\hat{m}}-\lfloor l_{\hat{g},\hat{m}} \rfloor \right)$ among all fractional $l_{g,m}$'s, and round it up or down to the nearest integer. We next reformulate and solve a new LP with $l_{\hat{g},\hat{m}}$ fixed. This procedure repeats until all the $l_{g,m}$'s are fixed. The complete SF algorithm is given in Table~\ref{tab:seqfixing0}.  The complexity of SF depends on the specific LP algorithm (e.g., the {\em simplex method} with polynomial-time average-case complexity). 

\begin{table} 
\begin{center}
\caption{The Sequential Fixing (SF) Algorithm}
\begin{tabular}{ll}
\hline
1: & Use RLT to linearize the original problem \\
2: & Solved the LP relaxation \\ %(e.g., using the {\em simplex method}). \\
3: & Suppose $l_{\hat{g},\hat{m}}$ is the integer variable with the minimum \\
   &  \ \ $\left( \lceil l_{\hat{g},\hat{m}} \rceil - l_{\hat{g},\hat{m}} \right)$ 
      or $\left( l_{\hat{g},\hat{m}}-\lfloor l_{\hat{g},\hat{m}} \rfloor \right)$  
      value among all $l_{g,m}$ \\
   &  \ \ variables that remain to be fixed, round it up or down to the \\
   &  \ \ nearest integer \\
4: & If all $l_{g,m}$'s are fixed, got to Step 6 \\
5: & Otherwise, reformulate a new relaxed LP with the newly \\
   & \ \ fixed $l_{g,m}$ variables, and go to Step 2 \\
6: & Output all fixed $l_{g,m}$ variables and $R^e_g = \sum_{m=1}^M b_{g,m} l_{g,m}$ \\
\hline
\end{tabular}
\label{tab:seqfixing0}
\end{center}
\end{table}

%%%%%%%%%%%%%%%%%%%%%%%%%%%%%%%%%%%%%%%%%%%%%%%%%%%%%%%%%%%%%%%%%%%%%%%%%%%%%%%%%%%%
\paragraph{A Greedy Algorithm \label{subsubsec:ga1}}

Although SF can compute a near-optimal solution in polynomial time, it does not provide any guarantee on the optimality of the solution.  In the following, we describe a greedy algorithm, termed GRD1, which exploits the inherent priority structure of layered video and MC schemes and has a proven optimality bound.  

The complete greedy algorithm is given in Table~\ref{tab:GreedyAlgo1}, where $R=\sum_{g=1}^G \bar{R}_g^e$ is the total rate of all the enhancement layers and $\vec{e}_{i}$ is a {\em unit vector} with ``1'' at the $i$-th location and ``0'' at all other locations.  In GRD1, all the $l_{g,m}$'s are initially set to 0. During each iteration, one tile is allocated to the $\hat{m}$-th sub-layer of video $\hat{g}$.  In Step 4, $l_{\hat{m},  \hat{g}}$ is chosen to be the one that achieves the largest increase in terms of the ``normalized'' utility (i.e., $[U(\vec{l}+\vec{e}_{g,m})-U(\vec{l})]/[b_{g,m}+R/T_e]$) if it is assigned with an additional tile.  Lines 6, 7, and 8 check if the assigned rate exceeds the maximum rate $\bar{R}^e_g$.  GRD1 terminates when either all the available tiles are used or when all the video data are allocated with tiles. In the latter case, all the videos are transmitted at full rates.  We have the following Theorem for GRD1.  
\begin{table} [!t]
\begin{center}
\caption{The Greedy Algorithm (GRD1)}
\begin{tabular}{ll}
\hline
1:  & Initialize $l_{g,m}=0$ for all $g$ and $m$ \\
2:  & Initialize $A=\{1,2,\cdots,G\}$\\
3:  & WHILE $\left( \sum_{g=1}^G\sum_{m=1}^Ml_{g,m} \le T_e \; \mbox{and } A \; 
         \mbox{is not empty} \right)$ \\
4:  & $\;\;$ Find $l_{\hat{g},\hat{m}}$ that can be increased by one: \\
    & $\;\;\;\;\;\;$ $\vec{e}_{\hat{g},\hat{m}}=\arg \max_{g \in A,m \in [1,\cdots,M]} \left\{ \frac{U(\vec{l}+\vec{e}_{g,m})-U(\vec{l})}{b_{g,m}+R/T_e} \right\}$ \\
5:  & $\;\;$ $\vec{l}=\vec{l}+\vec{e}_{\hat{g},\hat{m}}$  \\
6:  & $\;\;$ IF $\left( \sum_m b_{\hat{g},m}l_{\hat{g},m} > \bar{R}_g^e \right)$ \\
7:  & $\;\;\;\;\;\;$ $\vec{l}=\vec{l}-\vec{e}_{\hat{g},\hat{m}}$  \\
8:  & $\;\;\;\;\;\;$ Delete $\hat{g}$ from $A$ \\
9:  & $\;\;$ END IF \\
10: & END WHILE \\
\hline
\end{tabular}
\label{tab:GreedyAlgo1}
\end{center}
\end{table}

\smallskip
\begin{theorem} 
The greedy algorithm GRD1 shown in Table~\ref{tab:GreedyAlgo1} has a complexity $O(M G T_e)$.  It guarantees a solution that is within a factor of $(1-e^{-1/2})$ of the global optimal solution. 
\end{theorem}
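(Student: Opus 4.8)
The plan is to handle the complexity claim first and then set up OPT-Part as a submodular maximization under a knapsack-type constraint and run the classical benefit--cost ratio greedy analysis (in the spirit of Wolsey and of~\cite{Deb08}). For the complexity: the \texttt{WHILE} loop in Table~\ref{tab:GreedyAlgo1} commits exactly one tile per iteration and never commits more than $T_e$ tiles, so it runs at most $T_e$ times; each iteration scans the $MG$ candidate pairs $(g,m)$ in Step~4, and by keeping the running partial sums $Q_g^b+\beta_g\sum_{m'\le k}b_{g,m'}l_{g,m'}$ up to date each marginal utility $U(\vec l+\vec e_{g,m})-U(\vec l)$ is maintained incrementally, so an iteration costs $O(MG)$ and the algorithm $O(MGT_e)$. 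For the approximation bound, Step~1 is to show that $U(\vec l)$ is nondecreasing and has the diminishing-returns (DR) property on the integer lattice: for $\vec l\le\vec l'$ and any coordinate $(g,m)$, $U(\vec l+\vec e_{g,m})-U(\vec l)\ge U(\vec l'+\vec e_{g,m})-U(\vec l')$. This holds because $U=\sum_g U_g$ is separable across videos and each $U_g=\sum_{k=1}^M(n_{g,k}-n_{g,k+1})\log\!\left(Q_g^b+\beta_g\sum_{m\le k}b_{g,m}l_{g,m}\right)$ is a nonnegative combination of compositions of the increasing concave $\log$ with nonnegative affine maps, and a concave nondecreasing function restricted to the lattice is DR-submodular.

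Step~2 introduces the \emph{generalized cost} $\tilde c_{g,m}=b_{g,m}+R/T_e$ appearing in the denominator of Step~4, where $R=\sum_g\bar R_g^e$. Any feasible $\vec l$ obeys $\sum_{g,m}\tilde c_{g,m}l_{g,m}=\sum_{g,m}b_{g,m}l_{g,m}+(R/T_e)\sum_{g,m}l_{g,m}\le R+R=2R$, the first term by summing (\ref{eq:st2}) over $g$ and the second by (\ref{eq:st1}); in particular the optimal allocation $\vec l^\star$ has generalized cost at most $2R$. Conversely, the tiles picked by GRD1 have total generalized cost at least $R$: if the loop stops with all $T_e$ tiles used the bound is $\ge(R/T_e)T_e=R$, and if it stops with $A$ empty then every video has been driven to its cap so $\sum_{g,m}b_{g,m}l_{g,m}\simeq R$. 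Step~3 is the key lemma. Writing $\vec l_j$ for the allocation after $j$ iterations, $U_j=U(\vec l_j)$, $U^\star=U(\vec l^\star)$, $\Delta_j=U_j-U_{j-1}$, $\delta_{g,m}(\vec l)=U(\vec l+\vec e_{g,m})-U(\vec l)$, and $\tilde c_j$ for the generalized cost of the tile chosen at iteration $j$: by monotonicity and DR-submodularity, raising $\vec l_{j-1}$ to the componentwise maximum $\vec l_{j-1}\vee\vec l^\star$ one unit at a time gives $U^\star-U_{j-1}\le\sum_{g,m}l^\star_{g,m}\,\delta_{g,m}(\vec l_{j-1})$; since Step~4 selects the maximizer of $\delta/\tilde c$, we have $\delta_{g,m}(\vec l_{j-1})\le\tilde c_{g,m}\,\Delta_j/\tilde c_j$ for every still-active $(g,m)$, hence $U^\star-U_{j-1}\le(\Delta_j/\tilde c_j)\sum_{g,m}\tilde c_{g,m}l^\star_{g,m}\le 2R\,\Delta_j/\tilde c_j$, i.e.\ $\Delta_j\ge\tfrac{\tilde c_j}{2R}(U^\star-U_{j-1})$.

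Step~4 finishes by telescoping. Rearranged, the lemma gives $U^\star-U_j\le(1-\tilde c_j/(2R))(U^\star-U_{j-1})$; multiplying over all iterations $j=1,\dots,J$, using $1-x\le e^{-x}$ and $\sum_j\tilde c_j\ge R$ from Step~2, yields $U^\star-U_J\le(U^\star-U_0)\,e^{-1/2}\le U^\star e^{-1/2}$, where the last step uses $U_0=U(\vec 0)=\sum_g N_g\log Q_g^b\ge 0$ (PSNR values exceed $1$). Hence $U_J\ge(1-e^{-1/2})\,U^\star$, which is the claimed guarantee.

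The step I expect to be the main obstacle is the interaction of the per-video rate caps (\ref{eq:st2}) with the ratio-optimality argument: once a video $\hat g$ is removed from $A$ in lines 6--8, the inequality $\delta_{\hat g,m}(\vec l_{j})\le\tilde c_{\hat g,m}\Delta_{j+1}/\tilde c_{j+1}$ can fail, so the key lemma must be supplemented by the observation that a capped video already contributes to $U_j$ at least as much as it contributes to $U^\star$ (allowing its coordinates to be dropped from $\sum_{g,m}l^\star_{g,m}\delta_{g,m}(\vec l_{j-1})$). In the operating regime considered here $\bar R_g^e$ is deliberately set to a large, codec-limited value, the caps are not binding, and GRD1 always terminates via the tile budget, so lines 6--9 are inert and the analysis above applies verbatim; a fully general argument would additionally have to absorb the at-most-one-tile rounding slack per capped video in the ``generalized cost $\ge R$'' estimate of Step~2.
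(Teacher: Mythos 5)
Your proof follows essentially the same route as the paper's: the same generalized cost $b_{g,m}+R/T_e$, the same key recursion $U^\star-U_j\le\bigl(1-\tilde c_j/(2R)\bigr)(U^\star-U_{j-1})$ obtained from concavity (your DR-submodularity) plus the greedy ratio rule, and the same two-case termination argument giving $\sum_j\tilde c_j\ge R$ and hence the $e^{-1/2}$ factor. The only substantive difference is that you explicitly flag the interaction of the rate caps in (\ref{eq:st2}) with the ratio-optimality step once a video is deleted from $A$, and the one-tile slack in the ``generalized cost $\ge R$'' estimate --- points the paper's proof silently elides --- so your treatment of those edge cases is the more careful of the two.
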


\begin{proof}
(i) {\em Complexity}: In Step 4 in Table~\ref{tab:GreedyAlgo1}, it takes $O(MG)$ to solve for $\vec{e}_{\hat{g},\hat{m}}$. Since each iteration assigns one tile to sub-layer $\hat{m}$ of group $\hat{g}$, it takes $T_e$ iterations to allocate all the available tiles in a GoP window. Therefore, the overall complexity of GRD1 is $O(MGT_e)$.

(ii) {\em Optimality Bound}: This proof is extended from a result first shown in~\cite{Deb08} for layered videos. We first show a property of group utility $U_g(\vec{l})$, which will be used in the proof of the optimality gap.  For two vectors $\vec{l}^1_g$ and $\vec{l}^2_g$, letting $\Delta = U_g(\vec{l}^1_g)-U_g(\vec{l}^2_g)$, we have
%\begin{eqnarray}\label{eq:lema2}
%& & \hspace{-0.2in} \Delta = \mbox{$\sum_{k=1}^M$} (n_{g,k}-n_{g,k+1}) \times \nonumber \\
%& & \hspace{0.2in} \log \left( 1 + \frac{ \sum_{m=1}^k \beta_g b_{g,m}(l_{g,m}^1-l_{g,m}^2) }{ Q^b_g + \sum_{m=1}^k \beta_g b_{g,m} l_{g,m}^2 } \right) \nonumber  \\
%& & \hspace{-0.2in} \le \mbox{$\sum_{k=1}^M \sum_{m=1}^k$} (l_{g,m}^1-l_{g,m}^2)^+(n_{g,k}-n_{g,k+1}) \times \nonumber \\
%& & \hspace{0.2in} \log \left( 1+ \beta_g b_{g,m} / \left[ Q^b_g+\mbox{$\sum_{m=1}^k$} \beta_g b_{g,m} l_{g,m}^2 \right] \right) \nonumber \\
%& & \hspace{-0.2in} \le \mbox{$\sum_{k=1}^M \sum_{m=1}^M$} (l_{g,m}^1-l_{g,m}^2)^+(n_{g,k}-n_{g,k+1}) \times \nonumber \\
%& & \hspace{0.2in} \log \left( 1 + \beta_g b_{g,m} / \left[ Q^b_g + \mbox{$\sum_{m=1}^k$} \beta_g b_{g,m} l_{g,m}^2 \right] \right) \nonumber\\
%& & \hspace{-0.2in} = \mbox{$\sum_{m=1}^M$} (l_{g,m}^1-l_{g,m}^2)^+ \left[ U_g(\vec{l}^2_g+b_{g,m})-U(\vec{l}^2_g) \right],
%\end{eqnarray}
\begin{eqnarray}\label{eq:lema2}
& & \hspace{-0.2in} \Delta = \sum_{k=1}^M (n_{g,k}-n_{g,k+1}) \times \log \left( 1 + \frac{ \sum_{m=1}^k \beta_g b_{g,m}(l_{g,m}^1-l_{g,m}^2) }{ Q^b_g + \sum_{m=1}^k \beta_g b_{g,m} l_{g,m}^2 } \right) \nonumber  \\
& & \hspace{-0.2in} \le \sum_{k=1}^M \sum_{m=1}^k (l_{g,m}^1-l_{g,m}^2)^+(n_{g,k}-n_{g,k+1}) \times \log \left( 1+ \beta_g b_{g,m} / \left[ Q^b_g+\sum_{m=1}^k \beta_g b_{g,m} l_{g,m}^2 \right] \right) \nonumber \\
& & \hspace{-0.2in} \le \sum_{k=1}^M \sum_{m=1}^M (l_{g,m}^1-l_{g,m}^2)^+(n_{g,k}-n_{g,k+1}) \times  \log \left( 1 + \beta_g b_{g,m} / \left[ Q^b_g + \sum_{m=1}^k \beta_g b_{g,m} l_{g,m}^2 \right] \right) \nonumber\\
& & \hspace{-0.2in} = \sum_{m=1}^M (l_{g,m}^1-l_{g,m}^2)^+ \left[ U_g(\vec{l}^2_g+b_{g,m})-U(\vec{l}^2_g) \right],
\end{eqnarray}
where $y^+=\max \{ 0, y \}$. The first inequality is due to the concavity of logarithm functions. 

Next we prove the optimality bound. Let $\vec{l}_{t}$ be the output of GRD1 after $t$ iterations. Let the utility gap between the optimal solution and the GRD1 solution be $F_{t} = U(\vec{l}^{\ast}) - U(\vec{l}_{t})$, and $\vec{e}_{\hat{g},\hat{m}}(t)$ the argument found in Step 4 of GRD1 after $t$ iterations. We have $\vec{l}_{t} = \vec{l}_{t-1} + \vec{e}_{\hat{g}, \hat{m}}(t)$ and
\begin{eqnarray}
 & & \hspace{-0.1in} F_{t-1} = U(\vec{l}^{\ast})-U(\vec{l}_{t-1}) \nonumber \\
 & & \hspace{-0.3in} \le \sum_g\sum_m (l_{g,m}^{\ast}-l_{g,m})^{+}
[U(\vec{l}_{t-1}+\vec{e}_{g,m}(t))-U(\vec{l}_{t-1})] \nonumber \\
 & & \hspace{-0.3in} \le \sum_g\sum_m (l_{g,m}^{\ast}-l_{g,m})^{+} [U(\vec{l}_{t-1}+\vec{e}_{\hat{g},\hat{m}}(t))- U(\vec{l}_{t-1})]
\frac{b_{g,m}+R/T_e}{b_{\hat{g},\hat{m}}(t)+R/T_e}\nonumber \\
 & & \hspace{-0.3in} \le \frac{U(\vec{l}_t)-U(\vec{l}_{t-1})}{ b_{\hat{g}, \hat{m}}(t) + R/T_e}
\sum_g\sum_m [l_{g,m}^{\ast}(b_{g,m}+R/T_e)]. \nonumber
\end{eqnarray}
The first inequality is due to (\ref{eq:lema2}) and the second inequality follows Step 4 of GRD1.  It follows (\ref{eq:st1}) that $\sum_g\sum_m l_{g,m}^{\ast} \le T_e$ and $\sum_g\sum_m b_{g,m}l_{g,m}^{\ast} \le R$.  We have $F_{t-1} \le (F_{t-1}-F_{t}) \frac{2R}{b_{\hat{g}, \hat{m}}(t)+R/T_e}$.  Solving for $F_t$, we have $F_{t}\le F_{t-1} \left\{ 1 - \left[ b_{\hat{g},\hat{m}}(t)+R/T_e \right] / (2R) \right\}$. 

Suppose the {\em WHILE} loop in Table~\ref{tab:GreedyAlgo1} has been executed $k$ times when the solution is obtained. 
\begin{eqnarray}
F_k &\le& F_{k-1} \left\{ 1-\left[b_{\hat{g},\hat{m}}(k)+R/T_e\right] / (2R) \right\} \nonumber \\
& \le & F_0 \prod_{t=1}^k \left\{ 1 - \left[ b_{\hat{g},\hat{m}}(t)+R/T_e \right] / (2R) \right\} \nonumber \\
&\le& F_0 \left\{ 1 - 1/(2kR) \sum_{t=1}^k [ b_{\hat{g},\hat{m}}(t)+R/T_e ] \right\}^k. \nonumber
\end{eqnarray}
The {\em WHILE} loop exits when one or both of two constraints are violated. If $\sum_{g}\sum_{m} l_{g,m} \le T_e$ is violated, there is no tile that can be used. Therefore $k \ge T_e$ and $\sum_{t=1}^kR/T_e\ge R$. If constraint ``$A$ is not empty'' is violated, all the videos have been allocated sufficient number of tiles and will be transmitted at full rates. We have $\sum_{t=1}^k b_{\hat{g},\hat{m}}(t) \geq R$ in this case.  It follows that
\begin{eqnarray}
F_k &\le& 
F_0 \left\{ 1 - 1/(2kR) \sum_{t=1}^k [ b_{\hat{g},\hat{m}}(t)+R/T_e ] \right\}^k \nonumber \\
& \le & F_0 \left[ 1 - 1/(2k) \right]^k \; \le \; F_0 e^{-1/2}. \nonumber %\\
\end{eqnarray}
Since $F_0=U(\vec{l}^{\ast})$, we have $U(\vec{l}_k)\ge(1-e^{-1/2})U(\vec{l}^{\ast})$.  Therefore, we conclude that the GRD1 solution is bounded by $(1-e^{-1/2})U(\vec{l}^{\ast})$ and $U(\vec{l}^{\ast})$.
\end{proof}

%%%%%%%%%%%%%%%%%%%%%%%%%%%%%%%%%%%%%%%%%%%%%%%%%%%%%%%%%%%%%%%%%%%%%%%%%%%%%%%%%%%%
\paragraph{A Refined Greedy Algorithm \label{subsubsec:ga2}}

GRD1 computes $l_{g,m}$'s based on an estimate of network status $\vec{S}(t)$ in the next $T_{GoP}$ time slots.  Due to channel dynamics, the computed $l_{g,m}$'s may not be exactly accurate, especially when $T_{GoP}$ is large.  We next present a refined greedy algorithm, termed GRD2, which adjusts the $l_{g,m}$'s based on more accurate estimation of the channel status.

GRD2 is executed at the beginning of every time slot.  It estimates the number of available tiles $T_e(t)$ in the next $T_{est}$ successive time slots, where $1 \le T_{est}\le T_{GoP}$ is a design parameter depending on the coherence time of the channels. Such estimates are more accurate than that in GRD1 since they are based on recently received ACKs and recent sensing results.  Specifically, we estimate $T_e(t)$ using the belief vector $\vec{a}(t)$ in time slot $t$. Recall that $a_n(t)$'s are computed based on the channel model, feedback, sensing results, and sensing errors, as given in (\ref{eq:AvailProb1}), and (\ref{eq:OneStep}). For the next time slot, $a_n(t+1)$ can be estimated as $\hat{a}_n(t+1) = \lambda_n a_n(t) + \mu_n [1-a_n(t)] = (\lambda_n-\mu_n) a_n(t)+\mu_n$.  Recursively, we can derive $\hat{a}_n(t+\tau)$ for the next $\tau$ time slots.
\begin{eqnarray}
  \hat{a}_n(t+\tau) = (\lambda_n-\mu_n)^\tau a_n(t)+ %\nonumber \\
  \mu_n \frac{1-(\lambda_n-\mu_n)^\tau}{1-(\lambda_n-\mu_n)}.
\end{eqnarray}

At the beginning section of a GoP window, all the base layers will be firstly transmitted. We start the estimation after all the base layers have been successfully received (possibly with retransmissions). The number of available tiles in the following $T_{est}$ time slots can be 
estimated as $T_e(t) = \sum_{n=1}^N\sum_{\tau=0}^{t_{min}}\hat{a}_n(t+\tau)$, where $\hat{a}_n(t+0)=a_n(t)$ and $t_{min}=\min\{T_{est}-1,T_{GoP} - (t \; \mbox{mod} \; T_{GoP})\}$.  $T_e(t)$ may not be an integer, but it does not affect the outcome of GRD2.  

We then adjust the $l_{g,m}$'s based on $T_e(t)$ and $N_{ack}(t)$, the number of ACKs received in time slot $t$.  If $T_e(t)+N_{ack}(t-1) > T_e(t-1)+N_{ack}(t-2)$, there are more tiles that can be allocated and we can increase some of the $l_{g,m}$'s. On the other hand, if $T_e(t)+N_{ack}(t-1) < T_e(t-1)+N_{ack}(t-2)$, we have to reduce some of the $l_{g,m}$'s.  Due to layered videos, when we increase the number of allocated tiles, we only need to consider $l_{g,m}$ for $m=m', m'+1, \cdots, M$, where $MC_{m'}$ is the highest MC scheme used in the previous time slot.  Similarly, when we reduce the number of allocated tiles, we only need to consider $l_{g,m}$ for $m=m',m'+1,\cdots,M$. 

The refined greedy algorithm is given in Table~\ref{tab:GreedyAlgo2}. For time slot $t$, the complexity of GRD2 is $O(MGK)$, where $K = | N_{ack}(t-1) - N_{ack}(t-2) + T_e(t)-T_e(t-1) |$.  Since $K \ll T_e$, the complexity of GRD2 is much lower than GRD1, suitable for execution in each time slot.

\begin{table} [!t]
\begin{center}
\caption{The Refined Greedy Algorithm (GRD2) for Each Time Slot}
\begin{tabular}{ll}
\hline
1: & Initialize $l_{g,m}=0$ for all $g$ and $m$ \\
2: & Initialize $A=\{1,2,\cdots,G\}$\\
3: & Initialize $N_{ack}(0)=0$ \\
4: & Estimate $T_e(1)$ based on the Markov Chain channel model \\
5: & Use GRD1 to find all $l_{g,m}$'s based on $T_e(1)$ \\
6: & WHILE $t=2$ to $T_{GoP}$ \\
7: & \ \ Estimate $T_e(t)$ \\
8: & \ \ IF $\left[ T_e(t)+N_{ack}(t-1) < T_e(t-1)+N_{ack}(t-2) \right]$ \\
9: & \ \ \ \ WHILE $\left[ \sum_{g=1}^G\sum_{m=1}^Ml_{g,m} > T_e(t)+N_{ack}(t-2) \right]$ \\
10: & \ \ \ \ \ \ Find $l_{\hat{g},\hat{m}}$ that can be reduced by 1: \\
  & \ \ \ \ \ \ \ \ $\vec{e}_{\hat{g},\hat{m}}=\arg \min_{\forall g, m \in \{m',\cdots,M\}} \left\{\frac{U(\vec{l})-U(\vec{l}-\vec{e}_{g,m})}{b_{g,m}+R/T_e} \right\}$ \\
11: & \ \ \ \ \ \ $\vec{l}=\vec{l}-\vec{e}_{\hat{g},\hat{m}}$ \\
12: & \ \ \ \ \ \ IF $(\hat{g} \notin A)$\\
13: & \ \ \ \ \ \ \ \ Add $\hat{g}$ to $A$\\
14: & \ \ \ \ \ \ END IF \\
15: & \ \ \ \ END WHILE \\
16: & \ \ END IF \\
17: & \ \ IF $\left[ T_e(t)+N_{ack}(t-1) > T_e(t-1)+N_{ack}(t-2) \right]$ \\
18: & \ \ \ \ WHILE $\left[ \sum_{g=1}^G\sum_{m=1}^Ml_{g,m} \le T_e(t)+N_{ack}(t-1) \right.$ and \\
    & \ \ \ \ \ \  $\left. A \mbox{ is not empty} \right]$ \\
19: & \ \ \ \ \ \ Find $l_{\hat{g},\hat{m}}$ that can be increased by 1 \\
    & \ \ \ \ \ \ \ \ $\vec{e}_{\hat{g},\hat{m}}=\arg \max_{g \in A,m \in \{m',\cdots,M\}} \left\{\frac{U(\vec{l}+\vec{e}_{g,m})-U(\vec{l})}{b_{g,m}+R/T_e} \right\}$ \\
20: & \ \ \ \ \ \ $\vec{l}=\vec{l}+\vec{e}_{\hat{g},\hat{m}}$                             \\
21: & \ \ \ \ \ \ IF $\left( \sum_m b_{\hat{g},m}l_{\hat{g},m} > \bar{R}^e_g \right)$ \\
22: & \ \ \ \ \ \ \ \ $\vec{l}=\vec{l}-\vec{e}_{\hat{g},\hat{m}}$ \\
23: & \ \ \ \ \ \ \ \ Delete $\hat{g}$ from $A$ \\
24: & \ \ \ \ \ \ END IF \\
25: & \ \ \ \ END WHILE \\
26: & \ \ END IF \\
27: & \ \ Update $N_{ack}(t-1)$ \\
28: & END WHILE \\
\hline
\end{tabular}
\label{tab:GreedyAlgo2}
\end{center}
\end{table}

%%%%%%%%%%%%%%%%%%%%%%%%%%%%%%%%%%%%%%%%%%%%%%%%%%%%%%%%%%%%%%%%%%%%%%%%%%%%%%%%%%%%
\subsubsection{Tile Scheduling in a Time Slot \label{subsec:schedule}}

In each time slot $t$, we need to schedule the remaining tiles for transmission on the $N$ channels.  We define $\mbox{Inc}(g,m,i)$ to be the increase in the group utility function $U(g)$ after the $i$-th tile in the sub-layer using $MC_m$ is successfully decoded. It can be shown that
%\begin{eqnarray}
%& & \hspace*{-0.2in} 
$$
\mbox{Inc}(g,m,i) = \sum_{k=m}^M (n_{g,k}-n_{g,k+1}) \times 
%\nonumber \\
%& & 
\log \left[ 1+\frac{\beta_gb_{g,m}}{Q_g^b+\beta_g\sum_{u=1}^{m-1}b_{g,u}l_{g,u}+(i-1)\beta_gb_{g,m}} \right]. \nonumber
$$
%\end{eqnarray}
$\mbox{Inc}(g,m,i)$ can be interpreted as the {\em reward} if the tile is successfully received. 

Letting $c_n(t)$ be the probability that the tile is successfully received, then we have $c_n(t) = p^{tr}_n(t) a_n(t)$.  Our objective of tile scheduling is to maximize the expected reward, i.e.,
\begin{eqnarray} \label{eq:mreward}
  \mbox{maximize:} \;\;\; \mbox{E} [\mbox{Reward}(\vec{\xi})] = \sum_{n=1}^N c_n(t) \cdot \mbox{Inc}(\xi_n),
\end{eqnarray}
where $\vec{\xi} = \{ \xi_n \}_{n=1,\cdots,N}$ and $\xi_n$ is the tile allocation for channel $n$, i.e., representing the three-tuple $\{g,m,i\}$.  The TSA algorithm is shown in Table~\ref{tab:GreedyAlgo3}, which solves the above optimization problem.  The complexity of TSA is $O(N \log N)$. We have the following theorem for TSA. 

\begin{table} [!t]
\begin{center}
\caption{Algorithm for Tile Scheduling in a Time Slot}
\begin{tabular}{ll}
\hline
1: & Initialize $m_g$ to the lowest MC that has not been ACKed for all $g$ \\ 
2: & Initialize $i_g$ to the first packet that has not been ACKed for all $g$ \\ 
3: & Sort $\{c_n(t)\}$ in decreasing order. Let the sorted channel list be \\
   & \ \ indexed by $j$. \\ 
4: & While ($j=1$ to $N$) \\
5: & \ \ Find the group having the maximum increase in $U(g)$: \\
   & \ \ \ \ \ $\hat{g}=\arg \max_{\forall g} \mbox{Inc}(g,m_g,i_g)$ \\
6: & \ \ Allocate the tile on channel $j$ to group $\hat{g}$ \\
7: & \ \ Update $m_{\hat{g}}$ and $i_{\hat{g}}$ \\
8: & End while \\
\hline
\end{tabular}
\label{tab:GreedyAlgo3}
\end{center}
\end{table}

\begin{theorem}
  E[Reward] is maximized if $\mbox{Inc}(\xi_i) > \mbox{Inc}(\xi_j)$ when $c_i(t) > c_j(t)$ 
  for all $i$ and $j$.
\end{theorem}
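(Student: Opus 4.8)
The plan is to treat the tile scheduling problem~(\ref{eq:mreward}) as a weighted assignment problem and prove optimality of the stated ordering by a rearrangement / exchange argument. First I would make explicit the two decisions bundled into a feasible schedule $\vec{\xi}$: (a) \emph{which} tiles are transmitted in this time slot --- and because of the decoding dependency of layered video, the tiles allocated to a group $g$ must form a \emph{prefix} of $g$'s decoding-priority order (base layer, then sub-layer $1$, sub-layer $2$, $\ldots$, earlier packets before later ones within a sub-layer); and (b) which of the $N$ channels carries each selected tile. The objective $\mbox{E}[\mbox{Reward}(\vec{\xi})] = \sum_{n=1}^N c_n(t)\,\mbox{Inc}(\xi_n)$ is linear in the channel--tile pairing, with channel ``weights'' $c_n(t) = p^{tr}_n(t) a_n(t)$ and tile ``rewards'' $\mbox{Inc}(\xi_n)$.

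Next I would record the key monotonicity property: $\mbox{Inc}(g,m,i)$ is non-increasing along the decoding-priority order of a fixed group. Within a sub-layer this is immediate from the closed form, since increasing $i$ only enlarges the denominator $Q_g^b + \beta_g\sum_{u=1}^{m-1} b_{g,u} l_{g,u} + (i-1)\beta_g b_{g,m}$ inside the logarithm while the outer weights $\sum_{k=m}^M(n_{g,k}-n_{g,k+1})$ are unchanged; across sub-layer boundaries it reflects the design choice that more important (lower) sub-layers are coded with more reliable MC schemes and are scheduled first. Two consequences follow. (i) For each group the best available tile is always its current frontier tile $(g,m_g,i_g)$, so Steps~5--7 of TSA, which hand each channel the group with the largest frontier reward, collect over the loop the $N$ tiles of globally largest $\mbox{Inc}$ among all reachable tiles, and this set automatically satisfies the per-group prefix constraint. (ii) Because TSA sorts channels in \emph{decreasing} order of $c_n(t)$ (Step~3) and gives each successive channel the current maximum-reward tile, its output pairs larger $c_n(t)$ with larger $\mbox{Inc}(\xi_n)$.

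The heart of the argument is the exchange step. Suppose $\vec{\xi}$ is any feasible schedule for which there are channels $i,j$ with $c_i(t) > c_j(t)$ but $\mbox{Inc}(\xi_i) < \mbox{Inc}(\xi_j)$; swap the tiles on channels $i$ and $j$. The multiset of transmitted tiles is unchanged, so every per-group prefix constraint is still met and the new schedule is feasible; the change in objective is $\left(c_i(t)-c_j(t)\right)\left(\mbox{Inc}(\xi_j)-\mbox{Inc}(\xi_i)\right) > 0$, a strict increase. Hence no schedule violating the stated ordering can be optimal, and conversely, by the rearrangement inequality, a feasible schedule with $c_i(t) > c_j(t) \Rightarrow \mbox{Inc}(\xi_i) \ge \mbox{Inc}(\xi_j)$ maximizes $\mbox{E}[\mbox{Reward}]$ among all schedules transmitting its tile set; combined with consequence (i) --- that TSA's top-$N$ tile set is itself reward-optimal over feasible tile sets --- the TSA schedule is globally optimal, which is the claim.

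The main obstacle I anticipate is not the rearrangement inequality, which is routine, but cleanly decoupling the ``which tiles'' decision from the ``which channel'' decision in the presence of the layered-video prefix constraint: one must be certain an improving channel swap never forces an infeasible tile set (handled above, since a swap preserves the transmitted multiset) and that greedily advancing each group's frontier really does harvest the top-$N$ tiles by reward (handled by monotonicity of $\mbox{Inc}$ along a group's decoding order). Making that monotonicity rigorous \emph{across} sub-layer boundaries --- rather than leaning on the ``lower layer uses a more reliable MC'' rationale --- is the single place where a little additional care, or an explicit structural assumption on the $b_{g,m}$'s and $n_{g,k}$'s, would be required.
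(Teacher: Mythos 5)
Your exchange step --- swapping the tiles on channels $i$ and $j$ when $c_i(t) > c_j(t)$ but $\mbox{Inc}(\xi_i) < \mbox{Inc}(\xi_j)$, and observing that the objective changes by $\left(c_i(t)-c_j(t)\right)\left(\mbox{Inc}(\xi_j)-\mbox{Inc}(\xi_i)\right) > 0$ --- is exactly the paper's proof, and it is all the theorem as stated requires: the claim is purely about the pairing of channel weights with tile rewards, not about which tiles are selected. Everything else in your proposal (the prefix constraint, the monotonicity of $\mbox{Inc}$ along a group's decoding order, the argument that TSA harvests the top-$N$ tiles) is aimed at a strictly stronger claim, namely global optimality of the full TSA schedule, which the paper does not attempt to prove and the theorem does not assert. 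Consequently the gap you flag --- making monotonicity of $\mbox{Inc}$ rigorous across sub-layer boundaries --- is a real issue for that stronger claim but is irrelevant to the stated theorem; for the theorem itself your argument is complete and coincides with the paper's.
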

\begin{proof} Suppose there exists a pair of $i$ and $j$ where $\mbox{Inc}(\xi_i) > \mbox{Inc}(\xi_j)$ and $c_i(t) < c_j(t)$. We can further increase E[Reward] by switching the tile assignment, i.e., assign channel $i$ to $\xi_j$ and channel $j$ to $\xi_i$. With this new assignment, the net increase in E[Reward] is 
\begin{eqnarray}
  & & \hspace*{-0.3in} c_j(t)\mbox{Inc}(\xi_i) + c_i(t)\mbox{Inc}(\xi_j)-c_i(t)\mbox{Inc}(\xi_i)-c_j(t)\mbox{Inc}(\xi_j) \nonumber \\
  & = & [c_j(t) - c_i(t)][\mbox{Inc}(\xi_i) - \mbox{Inc}(\xi_j)] > 0. \nonumber
\end{eqnarray}
Therefore E[Reward] is maximized when the $\{ \mbox{Inc}(\xi_i) \}$ and $\{c_i(t)\}$ are in the same order. 
\end{proof}

%%%%%%%%%%%%%%%%%%%%%%%%%%%%%%%%%%%%%%%%%%%%%%%%%%%%%%%%%%%%%%%%%%%%%%%%%%%%%%%%%%%%
\subsection{Simulation Results \label{sec:sim}}

We evaluate the performance of the proposed CR video multicast framework using a customized simulator implemented with a combination of C and MATLAB. Specifically, the LPs are solved using the MATLAB Optimization Toolbox and the remaining parts are written in C. For the results reported in this section, we have $N=12$ channels (unless otherwise specified). The channel parameters $\lambda_n$ and $\mu_n$ are set between $(0, 1)$. The maximum allowed collision probability $\gamma_n$ is set to 0.2 for all the channels unless  otherwise specified. 

The CR base station multicasts three Common Intermediate Format (CIF, $352 \times 288$) video sequences to three multicast groups, i.e., {\em Bus} to group 1, {\em Foreman} to group 2, and {\em Mother \& Daughter} to group 3.  The $n_{1,m}$'s are \{42, 40, 36, 30, 22, 12\} (i.e., 42 users can decode $MC_1$ signal, 40 users can decode $MC_2$ signal, and so forth); the $n_{2,m}$'s are \{51, 46, 40, 32, 23, 12\} and the $n_{3,m}$'s are \{49, 44, 40, 32, 24, 13\}.  The number of bits carried in one tile using the MC schemes are 1 kb/s, 1.5 kb/s, 2 kb/s, 3 kb/s, 5.3 kb/s, and 6 kb/s, respectively.  We choose $T_{GoP}$=150 and $T_{est}=10$, sensing interval $W=3$, false alarm probability $\epsilon_n = 0.3$ and miss detection probability $\delta_n = 0.25$ for all $n$, unless otherwise specified. 

In every simulation, we compare three schemes: (i) a simple heuristic scheme that equally allocates tiles to each group (Equal Allocation); (ii) A scheme based on SF (Sequential Fixing), and (iii) a scheme based on the greedy algorithm GRD2 (Greedy Algorithm).  These schemes have increasing complexity in the order of Equal Allocation, Greedy Algorithm, and Sequential Fixing. They differ on how to solve Problem OPT-Part, while the same tile scheduling algorithm and opportunistic spectrum access scheme are used in all the schemes.  Each point in the figures is the average of 10 simulation runs, with 95\% confidence intervals plotted.  We observe that the 95\% confidence intervals for Equal Allocation and Greedy Algorithm are negligible, while the 95\% confidence intervals for Sequential Fixing is relatively larger.  The C/MATLAB code is executed in a Dell Precision Workstation 390 with an Intel Core 2 Duo E6300 CPU working at 1.86 GHz and a 1066 MB memory.  For number of channels ranging from $N$=3 to $N$=15, the execution times of Equal Allocation and Greedy Algorithm are about a few milliseconds, while Sequential Fixing takes about two seconds.  

In Fig.~\ref{fig:PSNRcomp} we plot the average PSNR among all users in each multicast group. 
For all the groups, Greedy Algorithm achieves the best performance, with up to 4.2 dB improvements over Equal Allocation and up to 0.6 dB improvements over Sequential Fixing.  We find Sequential Fixing achieves a lower PSNR than Equal Allocation for group 3, but higher PSNRs for groups 1 and 2. This is because Equal Allocation does not consider channel conditions and fairness.  It achieves better performance for group 3 at the cost of much lower PSNRs for groups 1 and 2.  We also plot Frame 53 from the original {\em Bus} sequence and the decoded video at user 1 of group 1 in Fig~\ref{fig:images}. We choose this user since it is one of the users with lowest PSNR values. The average PSNR of this user is 29.54 dB, while the average PSNR of all group 1 users is 34.6 dB.  Compared to the original frame (right), the reconstructed frame (left) looks quite good, although some details are lost.

%--------------------------------------------------------------------------------
\begin{figure}[!t]
\centering
\includegraphics[width=4.5in, height=3.0in]{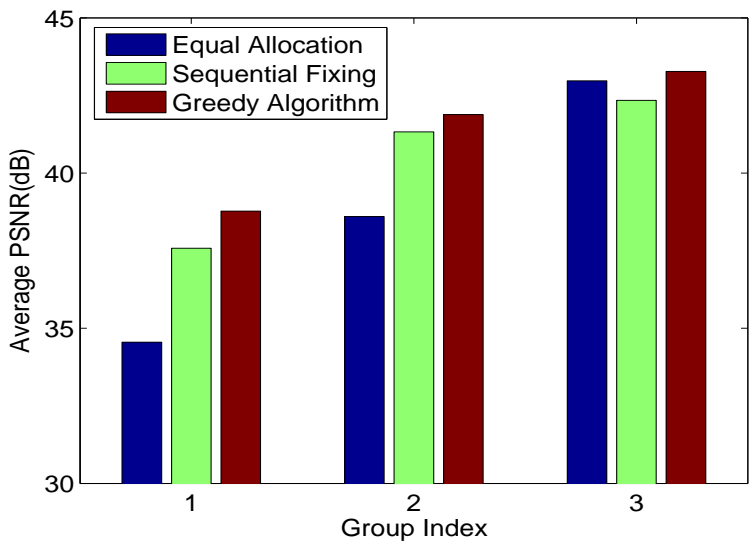}
\caption{Average PSNR of all multicast users.}
\label{fig:PSNRcomp}
\end{figure}
%--------------------------------------------------------------------------------

%--------------------------------------------------------------------------------
\begin{figure}[!t]
\centering
\includegraphics[width=4.5in]{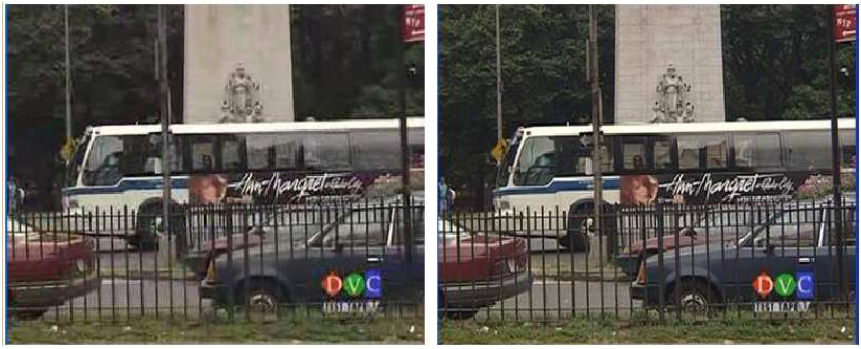}
\caption{The original (the right one) and decoded Frame 53 (the left one) at user 1 in group 1.}
\label{fig:images}
\end{figure}
%--------------------------------------------------------------------------------

In Fig.~\ref{fig:PSNRgamma}, we examine the impact of the maximum allowed collision probability $\gamma_n$.  We increase $\gamma_n$ from 0.1 to 0.3, and plot the average PSNR values among all the users.  When $\gamma_n$ gets larger, there will be higher chance of collision for the video packets, which hurts the received video quality.  However, a higher $\gamma_n$ also allows a higher transmission probability $p_n^{tr}(t)$ for the base station (see (\ref{eq:ptran})), thus allowing the base station to grab more spectrum opportunities and achieve a higher video rate.  The net effect of these two contradicting effects is improved video quality for the range of $\gamma_n$ values considered in this simulation.  This is illustrated in the figure where all the three curves increase as $\gamma_n$ gets larger.  We also observe that the curves for Sequential Fixing and Equal Allocation are roughly parallel to each other, while the Greedy Algorithm curve has a steeper slope. This indicates that Greedy Algorithm is more efficient in exploiting the additional bandwidth allowed by an increased $\gamma_n$. 

%--------------------------------------------------------------------------------
\begin{figure}[!t]
\centering
\includegraphics[width=4.5in, height=3.0in]{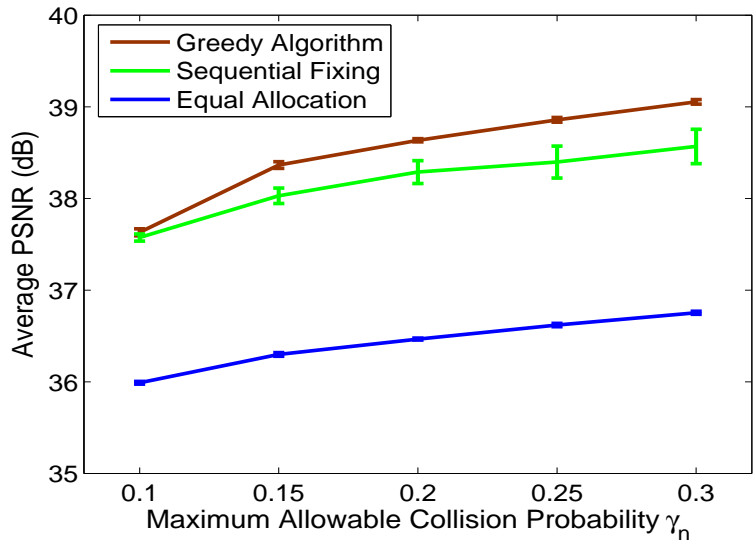}
\caption{Average PSNR of all users versus $\gamma_n$ (with 95\% confidence intervals).}
\label{fig:PSNRgamma}
\end{figure}
%--------------------------------------------------------------------------------

In Fig.~\ref{fig:PSNRn}, we examine the impact of number of channels $N$.  We increase $N$ from 3 to 15 in steps of 3, and plot the average PSNR values of all multicast users. As expected, the more channels, the more spectrum opportunities for the CR networks, and the better the video quality.  Again, we observe that the Greedy Algorithm curve has the steepest slope, implying it is more efficient in exploiting the increased spectrum opportunity for video transmissions. 

%--------------------------------------------------------------------------------
\begin{figure}[!t]
\centering
\includegraphics[width=4.5in, height=3.0in]{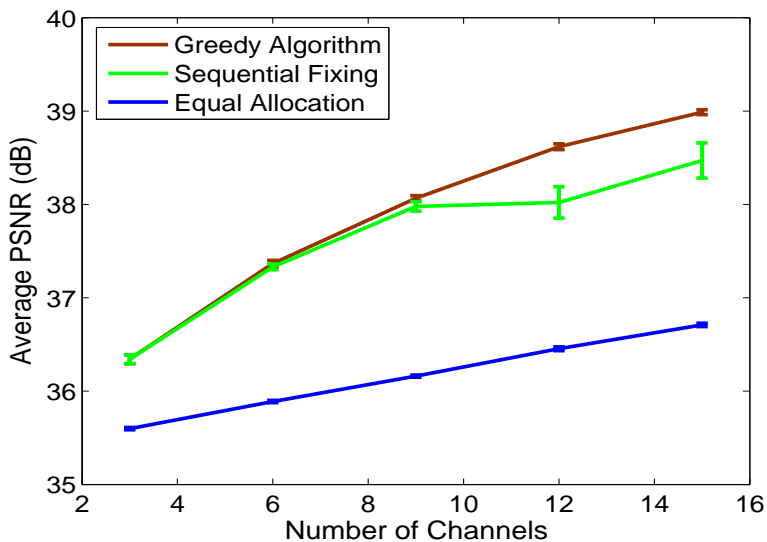}
\caption{Average PSNR of all users versus $N$ (with 95\% confidence intervals).}
\label{fig:PSNRn}
\end{figure}
%--------------------------------------------------------------------------------

We demonstrate the impact of sensing errors in Fig.~\ref{fig:PSNRepsilon}. We test five sets of $\{ \epsilon_n, \delta_n \}$ values as follows: $\{0.10, 0.38\}$, $\{0.30, 0.25\}$, $\{0.5, 0.17\}$, $\{0.70, 0.10\}$  and $\{0.9, 0.04\}$~\cite{Chen08}, and plot the average PSNR values of all users. It is quite interesting to see that the video quality is not very sensitive to sensing errors. Even as $\epsilon_n$ is increased nine times from 10\% to 90\%, there is only 0.58 dB reduction (or a 1.5\% normalized reduction) in average PSNR when Greedy Algorithm is used. The same can be observed for the other two curves.  We conjecture that this is due to the opportunistic spectrum access approach adopted in all the three schemes.  A special strength of the proposed approach is that it explicitly considers both types of sensing errors and mitigates the impact of both sensing errors.  For example, when the false alarm rate is very high, the base station will not trust the sensing results and will access the channel relatively more aggressively, thus mitigating the negative effect of the high false alarm rate.

%--------------------------------------------------------------------------------
\begin{figure}[!t]
\centering
\includegraphics[width=4.5in, height=3.0in]{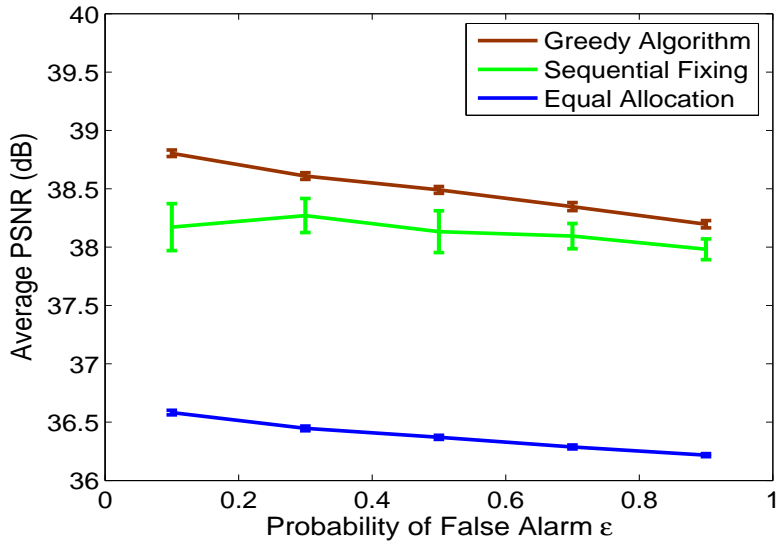}
\caption{Average PSNR of all users for various $\{\epsilon_n, \delta_n\}$ values (with 95\% confidence intervals).}
\label{fig:PSNRepsilon}
\end{figure}
%--------------------------------------------------------------------------------

Finally, we demonstrate the impact of user channel variations (i.e., due to mobility).  We chose a tagged user in group 1 and assume that its channel condition changes every 20 GoPs. The highest MC scheme that the tagged user can decode is changed according to the following sequence: MC3, MC5, MC4, MC6, MC5 and MC3. All other parameters remain the same as in the previous experiments. In Fig.~\ref{fig:PSNRtrace}, we plot the average PSNRs for each GoP at this user that are obtained using the three algorithms. We observe that both Greedy Algorithm and Sequential Fixing can quickly adapt to changing channel conditions. Both algorithms achieve received video qualities commensurate with the channel quality of the tagged user. We also find the video quality achieved by Greedy Algorithm is more stable than that of Sequential Fixing, while the latter curve has some deep fades from time to time. This is due to the fact that Greedy Algorithm has a proven optimality bound, while Sequential Fixing does not provide any guarantee. The Equal Allocation curve is relative constant for the entire period since it does not adapt to channel variations. Although being simple, it does not provide good video quality in this case. 

For optimization-driven multimedia systems, there is a trade-off between (i) grabbing all the available resource to maximize media quality and (ii) be less adaptive to network dynamics for a smooth playout.  The main objective of this paper is to demonstrate the feasibility and layout the framework for video streaming over infrastructure-based CR networks, using an objective function of maximizing the overall user utility. We will investigate the interesting problem of trading off resource utilization and smoothness in our future work.

%--------------------------------------------------------------------------------
\begin{figure}[!t]
\centering
\includegraphics[width=5.7in]{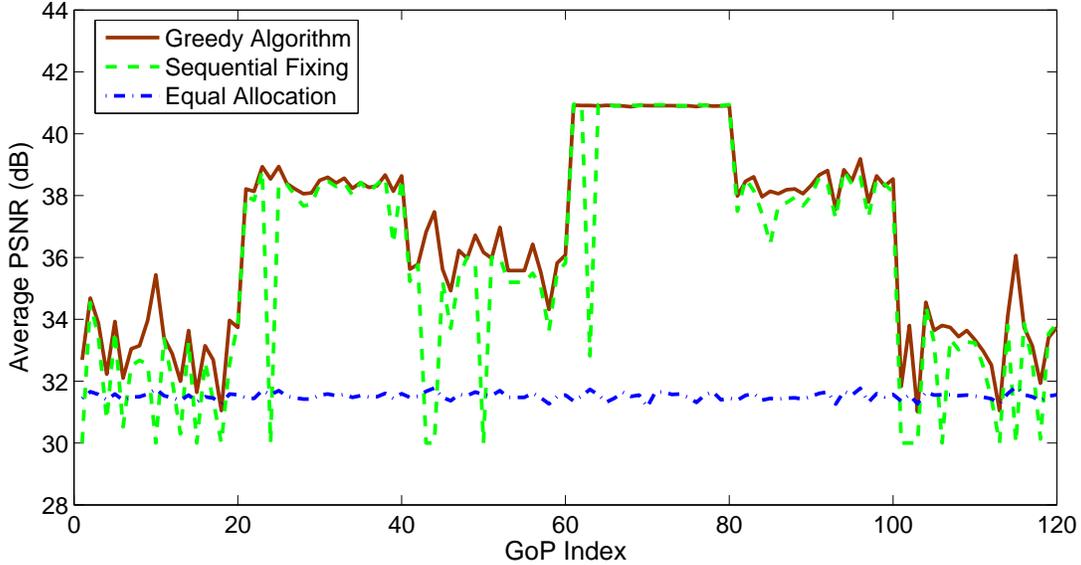}
\caption{GoP average PSNRs of a tagged user in Group 1, when its channel condition varies over time.}
\label{fig:PSNRtrace}
\end{figure}
%--------------------------------------------------------------------------------

%-------------------------------------------------------------------
% Video over multi-hop CR networks
\section{Video over Multi-hop CR Networks}\label{sec:cr_video_mhop}
In this section, we examine the problem of video over multi-hop CR networks. We model streaming of concurrent videos as an MINLP problem, aiming to maximize the overall received video quality and fairness among the video sessions, while bound the collision rate with primary users under spectrum sensing errors. We solve the
MINLP problem using a centralized sequential fixing algorithm, and derive upper and lower bounds for the objective value. We then apply dual decomposition to develop a distributed algorithm and prove its optimality and convergence conditions.

%-------------------------------------------------------------------
\subsection{Network Model}
%-------------------------------------------------------------------

%-------------------------------------------------------------------
\subsubsection{Spectrum Access \label{subsec:dsa}}
%-------------------------------------------------------------------

During the transmission phase of a time slot, a CR user determines which channel(s) to access for transmission of video data based on spectrum sensing results.  Let $\kappa_m^k$ be a threshold  for spectrum access: channel $m$ is considered idle if the estimate $a_m^k$ is greater than the threshold, and busy otherwise. The availability of channel $m$ in primary network $k$, denoted as $A_m^k$, is
\begin{eqnarray}\label{eq:DefAmk}
	A_m^k=\left\{
	\begin{array}{ll}
	0, & \mbox{$a_m^k \ge \kappa_m^k$}\\
	1, & \mbox{otherwise.}\\
\end{array}\right.
\end{eqnarray}

For each channel $m$, we can calculate the probability of collision with primary users as:
\begin{eqnarray}\label{eq:ProbLinkCol}
  \Pr(A_m^k=0|H_1) = \hspace{-0.05in} \sum_{i \in \psi_m^k} 
		\hspace{-0.06in} \left( \hspace{-0.06in}
		\begin{array}{c} 
			|\mathcal{U}_m^k| \\
			i
		\end{array}
		\hspace{-0.06in} \right)
		(1-\delta_m)^{|\mathcal{U}_m^k|-i}(\delta_m)^i, \hspace{-0.05in}
\end{eqnarray}
where set $\psi_m^k$ is defined as:
\begin{equation}
  \psi_m^k = \left\{ i \left| \left[ 1 + \varphi_m^i\phi_m^{ | \mathcal{U}_m^k|-i} 
	     \frac{\Pr(H_1|\pi_m^k)}{\Pr(H_0|\pi_m^k)} \right]^{-1}  \ge \kappa_m^k \right. \right\}.
\end{equation}
For non-intrusive spectrum access, the collision probability should be bounded with a prescribed threshold $\gamma_m^k$. 
A higher spectrum access threshold $\kappa_m^k$ will reduce the potential interference with primary users, but increase the chance of wasting transmission opportunities.  For a given collision tolerance $\gamma_m^k$, we can solve $\Pr(A_m^k=0|H_1) = \gamma_m^k$ for $\kappa_m^k$.  The objective is to maximize CR users' spectrum access without exceeding the maximum collision probability with primary users. 

Let $\Omega_{i,j}$ be the set of available channels at link $\{i,j\}$. Assuming $i \in \mathcal{U}^k$ and $j \in \mathcal{U}^{k'}$, we have
\begin{eqnarray}\label{eq:DefChi}
  \Omega_{i,j} = \left\{ m \left| A_m^k=0 \mbox{ and } A_m^{k'}=0 \right. \right\}. 
\end{eqnarray}

%-------------------------------------------------------------------
\subsubsection{Link and Path Statistics}
%-------------------------------------------------------------------

Due to the amplify-and-forward approach for video data transmission, there is no queueing delay at intermediate nodes.  Assume each link has a fixed delay $\omega_{i,j}$ (i.e., processing and propagation delays).  Let $\mathcal{P}_l^A$ be the set of all possible paths from $z_l$ to $d_l$.  For a given delay requirement $T_{th}$, the set of feasible paths $\mathcal{P}_l$ for video session $l$ can be determined as:
\begin{eqnarray}\label{eq:PathDelay}
  \mathcal{P}_l = \left\{ \mathcal{P} \left| \sum_{\{i,j\} \in \mathcal{P}}
      \; \omega_{i,j} \le T_{th}, \; \mathcal{P} \in \mathcal{P}_l^A \right. \right\}.
\end{eqnarray}
 
Let $p_{i,j}^m$ be the packet loss rate on channel $m$ at link $\{i,j\}$.  A packet is successfully delivered over link $\{i,j\}$ if there is no loss on all the channels that were used for transmitting the packet.  The link loss probability $p_{i,j}$ can be derived as:
\begin{eqnarray}\label{eq:LinkLoss}
  p_{i,j} = 1 - \prod_{m \in \mathcal{M}} 
            \; (1-p_{i,j}^m)^{I_m},
\end{eqnarray}
where $\mathcal{M}$ is set of licensed channels and $I_m$ is an indicator: $I_m = 1$ if channel $m$ is used for the transmission, and $I_m = 0$ otherwise.  Assuming independent link losses, the end-to-end loss probability for path $\mathcal{P}_l^h \in \mathcal{P}_l$ can be estimated as:
\begin{eqnarray}\label{eq:PathLoss}
  p_l^h = 1 - \prod_{\{i,j\}\in \mathcal{P}_l^h} \; (1-p_{i,j}).
\end{eqnarray}

%-------------------------------------------------------------------
\subsection{Problem Statement}
%-------------------------------------------------------------------

We also aim to achieve fairness among the concurrent video sessions. It has been shown that {\em proportional fairness} can be achieved by maximizing the sum of logarithms of video PSNRs (i.e., utilities). Therefore, our objective is to maximize the overall system utility, i.e.,
\begin{equation}\label{eq:MaxQ}
  \mbox{maximize:} \;\; \sum_l U_l(R_l) = \sum_l \log (Q_l(R_l)). 
\end{equation}

%-------------------------------------------------------------------
\subsubsection{Multi-hop CR Network Video Streaming Problem}

The problem of video over multi-hop CR networks consists of path selection for each video session and channel scheduling for each CR node along the chosen paths.  We define two sets of index variables.  For channel scheduling, we have
\begin{eqnarray}\label{eq:DefX1}
  x_{i,j,m}^{l,h,r}=\left\{
    \begin{array}{ll}
      1, & \mbox{at link $\{i,j\}$, if channel $m$ is} \\
         & \mbox{assigned to tunnel $r$ in path $\mathcal{P}_l^h$} \\
      0, & \mbox{otherwise.}
    \end{array}\right.
\end{eqnarray}
For path selection, we have
\begin{eqnarray}\label{eq:DefY}
	y_l^h=\left\{
	\begin{array}{ll}
		1, & \mbox{if video session $l$ selects path $\mathcal{P}_l^h \in \mathcal{P}_l$} \\
		0, & \mbox{otherwise},
	\end{array}\right.
\end{eqnarray}

Note that the indicators, $x_{i,j,m}^{l,h,r}$ and $y_l^h$, are not independent.  If $y_l^h=0$ for path $\mathcal{P}_l^h$, all the $x_{i,j,m}^{l,h,r}$'s on that path are $0$.  If link $\{i,j\}$ is not on path $\mathcal{P}_l^h$, all its $x_{i,j,m}^{l,h,r}$'s are also $0$.   For link $\{i,j\}$ on path $\mathcal{P}_l^h$, we can only choose those available channels in set $\Omega_{i,j}$ to schedule video transmission. That is, we have $x_{i,j,m}^{l,h,r} \in \{0, 1\}$ if $m \in \Omega_{i,j}$, and $x_{i,j,m}^{l,h,r} = 0$ otherwise.  In the rest of the paper, we use $\mathbf{x}$ and $\mathbf{y}$ to represent the vector forms of $x_{i,j,m}^{l,h,r}$ and $y_l^h$, respectively.

As discussed, the objective is to maximize the expected utility sum at the end of $N_G$ time slots, as given in (\ref{eq:MaxQ}).  Since $\log(Q_l(\mbox{E}[R_l(0)]))$ is a constant, (\ref{eq:MaxQ}) is equivalent to the sum of utility increments of all the time slots, as 
\begin{eqnarray}\label{eq:ObjFunPrf12}
  && \hspace{-0.3in} \sum_l \log(Q_l(\mbox{E}[R_l(N_G)])) - \log(Q_l(\mbox{E}[R_l(0)])) \nonumber \\
  && \hspace{-0.45in} = \sum_t \sum_l \left\{ \log(Q_l(\mbox{E}[R_l(t)])) - \log(Q_l(\mbox{E}[R_l(t-1)])) \right\}. 
\end{eqnarray}
Therefore, (\ref{eq:MaxQ}) will be maximized if we maximize the expected utility increment during each time slot, which can be written as:
\begin{eqnarray} \label{eq:objrewritten}
	 & & \hspace{-0.1in} \sum_l \log(Q_l(\mbox{E}[R_l(t)])) -
	    \log(Q_l(\mbox{E}[R_l(t-1)])) 
	 \nonumber \\
	 &=& 
	 \sum_l \log \left( 1 +  \beta_l \frac{ \mbox{E}[R_l(t)] - \mbox{E}[R_l(t-1)] }{ Q_l(\mbox{E}[R_l(t-1)])} \right) \nonumber \\
	 &=&  
	 \sum_l\sum_{h \in \mathcal{P}_l} y_l^h \log \hspace{-0.025in} \left( \hspace{-0.025in} 1 \hspace{-0.025in} + \hspace{-0.025in} \sum_r \sum_m \frac{\beta_l L_p x_{z_l, z'_l, m}^{l,h,r}}{N_G T_s Q_l^{t-1}} (1-p_{l,h}^r) \hspace{-0.025in} \right) 
	 \nonumber \\
	 &=& 
	 \sum_l\sum_{h \in \mathcal{P}_l}y_l^h \log \hspace{-0.025in} \left( \hspace{-0.025in} 1 \hspace{-0.025in} + \hspace{-0.025in} \rho_l^t \sum_r \sum_m x_{z_l,z'_l,m}^{l,h,r} (1-p_{l,h}^r) \hspace{-0.025in} \right), \nonumber 
\end{eqnarray}
where $z'_l$ is the next hop from $z_l$ on path $\mathcal{P}_l^h$, $p_{l,h}^r$ is the packet loss rate on tunnel $r$ of path $\mathcal{P}_l^h$, $Q_l^{t-1} = Q_l(\mbox{E}[R_l(t-1)])$, and $\rho_l^t = \beta_l L_p / (N_G T_s Q_l^{t-1})$. 

From (\ref{eq:LinkLoss}) and (\ref{eq:PathLoss}), the end-to-end packet loss rate for tunnel $r$ on path $\mathcal{P}_l^h$ is:
\begin{equation}\label{eq:StrmLoss}
  p_{l,h}^r = 1 - \prod_{\{i,j\} \in \mathcal{P}_l^h} 
          \prod_{m \in \mathcal{M}}
          (1-p_{i,j}^m)^{x_{i,j,m}^{l,h,r}} .
\end{equation}
We assume that each tunnel can only include one channel on each link. When there are multiple channels available at each link along the path, a CR source node can set up multiple tunnels to exploit the additional bandwidth. We then have the following constraint:
\begin{equation}\label{eq:OneChan1}
  \sum_m x_{i,j,m}^{l,h,r} \le 1, \;\; \forall \; \{i,j\} \in \mathcal{P}_l^h.
\end{equation}
Considering availability of the channels, we further have,
\begin{equation}\label{eq:AvailChan}
  \sum_r\sum_m x_{i,j,m}^{l,h,r} \le |\Omega_{i,j}|, \;\; \forall \; \{i,j\} \in \mathcal{P}_l^h,
\end{equation}
where $|\Omega_{i,j}|$ is the number of available channels on link $\{i,j\}$ defined in (\ref{eq:DefChi}).

As discussed, each node is equipped with two transceivers: one for receiving and the other for transmitting video data during the transmission phase.  Hence a channel cannot be used to receive and transmit data simultaneously at a relay node.  We have for each channel $m$:
\begin{eqnarray}\label{eq:HalfDuplex}
  %& & \hspace{-0.4in} 
  \sum_r x_{i,j,m}^{l,h,r} + \sum_r x_{j,k,m}^{l,h,r} \le 1, \;\; 
  %\nonumber \\
  %& & \hspace{0.5in} 
  \forall \; m, l, \forall \; h \in \mathcal{P}_l, \forall \; \{i,j\} ,\{j,k\} \in \mathcal{P}_l^h. 
\end{eqnarray}

Let $n_l^h$ be the number of tunnels on path $\mathcal{P}_l^h$.  For each source $z_l$ and each destination $d_l$, the number of scheduled channels is equal to $n_l^h$.  We have for each source node
\begin{equation}\label{eq:SrcDstFlow}
  \sum_r\sum_m x_{z_l,z'_l,m}^{l,h,r}=n_l^h y_l^h, \;\; \forall \; h \in \mathcal{P}_l, \forall \; l.
\end{equation}
Let $d'_l$ be the last hop to destination $d_l$ on path $\mathcal{P}_l^h$, we have for each destination node
\begin{equation}\label{eq:DstFlow}
  \sum_r\sum_m x_{d'_l,d_l,m}^{l,h,r}=n_l^h y_l^h, \;\; \forall \; h \in \mathcal{P}_l, \forall \; l.
\end{equation}

At a relay node, the number of channels used to receive data is equal to that of channels used to transmit data, due to flow conservation and amplify-and-forward.  At relay node $j$ for session $l$, assume $\{i,j\}\in \mathcal{P}_l^h$ and $\{j,k\}\in \mathcal{P}_l^h$.  We have,
\begin{eqnarray}\label{eq:IntmFlow}
  %& & \hspace{-0.4in}  
  \sum_r\sum_m x_{i,j,m}^{l,h,r} = \sum_r\sum_m x_{j,k,m}^{l,h,r}, \;\; 
  %\nonumber \\
  %& & \hspace{0.35in} 
  \forall \; h \in \mathcal{P}_l, \forall \; l, \forall \; \{i,j\} ,\{j,k\} \in \mathcal{P}_l^h.
\end{eqnarray}

We also consider hardware-related constraints on path selection. We summarize such constraints in the following general form for ease of presentation:
\begin{equation} \label{eq:PathCon}
  \sum_l\sum_{h\in \mathcal{P}_l} w_{l,h}^g y_l^h \le 1, \forall \; g.
\end{equation} 
To simplify exposition, we choose at most one path in $\mathcal{P}_l$ for video session $l$.  Such a single path routing constraint can be expressed as $\sum_h y_l^h \leq 1$, which is a special case of (\ref{eq:PathCon}) where $w_{l,h}^1=1$ for all $h$, and $w_{l',h}^g=0$ for all $g \neq 1$, $l' \neq l$, and $h$.  We can also have $\sum_h y_l^h \leq \xi$ to allow up to $\xi$ paths for each video session.  In order to achieve optimality in the general case of multi-path routing, an optimal scheduling algorithm should be designed to dispatch packets to paths with different conditions (e.g., different number of tunnels and delays). 

There are also disjointedness constraints for the chosen paths.  This is because each CR node is equipped with two transceivers and both will be used for a video session if it is included in a chosen path.  Such disjointedness constraint is also a special case of (\ref{eq:PathCon}) with the following definition for $w_{l,h}^g$ for each CR node $g$: 
\begin{equation} \label{eq:d}
  w_{l,h}^g = \left\{ \begin{array}{ll}
          1, & \mbox{if node $g \in $ path $\mathcal{P}_l^h$} \\
          0, & \mbox{otherwise},
                      \end{array} \right. 
\end{equation}

Finally we formulate the problem of multi-hop CR network video streaming (OPT-CRV) as:
\begin{eqnarray}
	&& \hspace{-0.4in} \mbox{max:} \;\; \sum_l \hspace{-0.03in} \sum_{h \in \mathcal{P}_l}y_l^h \log \hspace{-0.03in} \left( \hspace{-0.03in} 1 \hspace{-0.01in} + \hspace{-0.01in} \rho_l^t \sum_r \sum_m x_{z_l,z'_l,m}^{l,h,r} (1 \hspace{-0.01in} - \hspace{-0.01in} p_{l,h}^r) \hspace{-0.03in} \right) \label{eq:PriProb} \\
	&& \hspace{-0.4in} \mbox{subject to:} \;\;\; (\ref{eq:DefX1}) \sim (\ref{eq:PathCon}).  \nonumber
\end{eqnarray}

%%%%%%%%%%%%%%%%%%%%%%%%%%%%%%%%%%%%%%%%%%%%%%%%%%%%%%%%%%%%%%%
\subsubsection{Centralized Algorithm and Upper/Lower Bounds}

Problem OPT-CRV is in the form of MINLP (without continuous variables), which is NP-hard in general.  We first describe a centralized algorithm to derive performance bounds in this section, and then present a distributed algorithm based on dual decomposition in the next section.  

We first obtain a relaxed {\em non-linear programming} (NLP) version of OPT-CRV.  The binary variables $x_{i,j,m}^{l,h,r}$ and $y_l^h$ are relaxed to take values in [0,1].  The integer variables $n_l^h$ are treated as nonnegative real numbers.  It can be shown that the relaxed problem has a concave object function and the constraints are convex.  This relaxed problem can be solved using a constrained nonlinear optimization problem solver.  If all the variables are integer in the solution, then we have the exact optimal solution.  Otherwise, we obtain an infeasible solution, which produces an upper bound for the problem. This is given in Lines 1$\sim$2 in Table~\ref{tab:CentrialAlgo}. 

We also develop a {\em sequential fixing algorithm} (SF) for solving OPT-CRV.  The pseudo-code is given in Table~\ref{tab:CentrialAlgo}.  SF iteratively solves the relaxed problem, fixing one or more integer variables after each iteration~\cite{Hou08}.  In Table~\ref{tab:CentrialAlgo}, Lines 3$\sim$7 fix the path selection variables $y_l^h$, and Lines 8$\sim$16 fix the channel scheduling variables $x_{i,j,m}^{l,h,r}$ and tunnel variables $n_l^h$.  The tunnel variables $n_l^h$ can be computed using (\ref{eq:SrcDstFlow}) after $x_{i,j,m}^{l,h,r}$ and $y_l^h$ are solved.  When the algorithm terminates, it produces a feasible solution that yields a lower bound for the objective value.

%------------------------------------------------------------
\begin{table}[!t]
\begin{center}
\caption{The Sequential Fixing Algorithm (SF) for Problem OPT-CRV}
\begin{tabular}{ll}
\hline
1 : & Relax integer variables $x_{i,j,m}^{l,h,r}$, $y_l^h$, and $n_l^h$; \\ 
2 : & Solve the relaxed problem using a constrained NLP solver; \\ 
3 : & {\bf if} (there is $y_l^h$ not fixed) \\
4 : & $\;\;$ Find the largest $y_{l'}^{h'}$, where $[l',h']=\arg \max\{y_l^h\}$, \\
    & $\;\;$ and fix it to $1$; \\
5 : & $\;\;$ Fix other $y_l^h$'s according to constraint (\ref{eq:PathCon}); \\
6 : & $\;\;$ Go to Step 2; \\ 
7 : & {\bf end if} \\
8 : & {\bf if} (there is $x_{i,j,m}^{l,h,r}$ not fixed) \\
9 : & $\;\;$ Find the largest $x_{i',j',m'}^{l',h',r'}$, where $[i',j',m',l',h',r']=$ \\
    & $\;\;$ $\arg \max\{x_{i,j,m}^{l,h,r}\}$, and set it to 1; \\
10: & $\;\;$ Fix other $x_{i,j,m}^{l,h,r}$'s according to the constraints; \\
11: & $\;\;$ {\bf if} (there is other variable that is not fixed) \\
12: & $\;\;\;\;$ Go to Step 2; \\
13: & $\;\;$ {\bf else} \\
14: & $\;\;\;\;$ Fix $n_l^h$'s based on $\mathbf{x}$ and $\mathbf{y}$; \\
15: & $\;\;\;\;$ Exit with feasible solution $\{\mathbf{x}, \mathbf{y},\mathbf{n}\}$; \\
16: & $\;\;$ {\bf end if} \\
17: & {\bf end if} \\
\hline
\end{tabular}
\label{tab:CentrialAlgo}
\end{center}
\end{table}
%--------------------------------------------------------------

%%%%%%%%%%%%%%%%%%%%%%%%%%%%%%%%%%%%%%%%%%%%%%%%%%%%%%%%%%%%%%%
\subsection{Dual Decomposition \label{sec:dual}}
%%%%%%%%%%%%%%%%%%%%%%%%%%%%%%%%%%%%%%%%%%%%%%%%%%%%%%%%%%%%%%%

SF is a centralized algorithm requiring global information.  It may not be suitable for multi-hop wireless networks, although the upper and lower bounds provide useful insights on the performance limits. In this section, we develop a distributed algorithm for Problem OPT-CRV and analyze its optimality and convergence performance.

%%%%%%%%%%%%%%%%%%%%%%%%%%%%%%%%%%%%%%%%%%%%%%%%%%%%%%%%%%%%%%%
\subsubsection{Decompose Problem OPT-CRV}

Since the domains of $x_{i,j,m}^{l,h,r}$ defined in (\ref{eq:OneChan1})$\sim$(\ref{eq:IntmFlow}) for different paths do not intersect with each other, we can decompose Problem OPT-CRV into two subproblems.  The first subproblem deals with channel scheduling for maximizing the expected utility on a chosen path $\mathcal{P}_l^h$.  We have the {\em channel scheduling} problem (OPT-CS) as: 
\begin{eqnarray}
	&& \hspace{-0.2in} H_l^h=\max_{\mathbf{x}} \sum_r \sum_m
      x_{z_l,z'_l,m}^{l,h,r}(1-p_{l,h}^r)  \label{eq:optcs} \\
&& \hspace{-0.2in} \mbox{subject to:} \; (\ref{eq:OneChan1})\sim(\ref{eq:IntmFlow}), \; x_{z_l,z'_l,m}^{l,h,r} \in \{0,1\}, \; \mbox{for all } l, h, r, m. \nonumber
\end{eqnarray}
In the second part, optimal paths are selected to maximize the overall objective function.  Letting $F_l^h = \log \left( 1 + \rho_l^T H_l^h \right)$, we have the following {\em path selection} problem (OPT-PS):
\begin{eqnarray}
	\mbox{maximize:} && f(\mathbf{y}) = 
    \sum_l \sum_h F_l^h y_l^h  \label{eq:PathSelec} \\
	  \mbox{subject to:} && \sum_l\sum_{h 
	           \in \mathcal{P}_l} 
	  w_{l,h}^g y_l^h \le 1, \; \mbox{for all } \; g \nonumber \\
	  && y_l^h \in \{0,1\}, \; \mbox{for all } \; l, h. \nonumber
\end{eqnarray}

%%%%%%%%%%%%%%%%%%%%%%%%%%%%%%%%%%%%%%%%%%%%%%%%%%%%%%%%%%%%%%
\subsubsection{Solve the Channel Scheduling Subproblem}

We have the following result for assigning available channels at a relay node. 

\smallskip
\begin{theorem} \label{th:th1}
Consider three consecutive nodes along a path, denoted as nodes $i$, $j$, and $k$.  Idle channels 1 and 2 are available at link $\{i, j\}$ and idle channels 3 and 4 are available at link $\{j, k\}$.  Assume the packet loss rates of the four channels satisfy $p_{i,j}^1 > p_{i,j}^2$ and $p_{j,k}^3 > p_{j,k}^4$.  To set up two tunnels, assigning channels \{1, 3\} to one tunnel and channels \{2, 4\} to the other tunnel achieves the maximum expectation of successful transmission on path section $\{i, j, k\}$.
\end{theorem}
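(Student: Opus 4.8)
The plan is to reduce the statement to the two-term rearrangement inequality. First I would fix notation. For a tunnel that uses channel $a$ on link $\{i,j\}$ and channel $b$ on link $\{j,k\}$, formula (\ref{eq:StrmLoss}) gives its end-to-end success probability as $(1-p_{i,j}^{a})(1-p_{j,k}^{b})$. The quantity to be maximized over the section $\{i,j,k\}$ is the corresponding term of the OPT-CS objective (\ref{eq:optcs}): since on the first link each active tunnel has exactly one $x_{\cdot,\cdot,m}^{\cdot,\cdot,r}=1$, that objective reduces to $\sum_r (1-p^r)$, i.e.\ the sum of the success probabilities of the two tunnels. So I must show that, among all ways of forming two tunnels, the assignment pairing $\{1,3\}$ and $\{2,4\}$ maximizes $(1-p_{i,j}^{a_1})(1-p_{j,k}^{b_1}) + (1-p_{i,j}^{a_2})(1-p_{j,k}^{b_2})$.

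Next I would show the feasible set consists of exactly two configurations. By (\ref{eq:OneChan1}) each tunnel uses at most one channel per link; since channels $1,2$ are available only at $\{i,j\}$ and channels $3,4$ only at $\{j,k\}$ (disjoint sets), the half-duplex constraint (\ref{eq:HalfDuplex}) at node $j$ is vacuous here, and the flow-conservation constraints (\ref{eq:SrcDstFlow})--(\ref{eq:IntmFlow}) force each of the two tunnels to run end-to-end using one channel from $\{1,2\}$ and one from $\{3,4\}$, with the two tunnels using distinct channels on each link. This leaves precisely configuration~A, $\{1,3\}$ and $\{2,4\}$, and configuration~B, $\{1,4\}$ and $\{2,3\}$, so comparing A against B is exhaustive.

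Then the computation is one line. Writing $\alpha_r = 1-p_{i,j}^{r}$ and $\beta_r = 1-p_{j,k}^{r}$, configuration~A gives $\alpha_1\beta_3 + \alpha_2\beta_4$ and configuration~B gives $\alpha_1\beta_4 + \alpha_2\beta_3$, whose difference is $(\alpha_2-\alpha_1)(\beta_4-\beta_3) = (p_{i,j}^{1}-p_{i,j}^{2})(p_{j,k}^{3}-p_{j,k}^{4})$, which is strictly positive under the hypotheses $p_{i,j}^{1}>p_{i,j}^{2}$ and $p_{j,k}^{3}>p_{j,k}^{4}$. Hence configuration~A strictly dominates, which is the claim. (This is simply the rearrangement inequality for two terms: a sum of pairwise products is largest when the two sequences are sorted the same way.)

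The argument is short, and I do not expect a genuine obstacle; the only step that needs care is the bookkeeping in the middle paragraph, namely verifying that constraints (\ref{eq:OneChan1})--(\ref{eq:IntmFlow}) really do pin the feasible set down to configurations~A and~B (ruling out, e.g., routing both tunnels through the same channel on a link). Once that is settled, the one-line inequality above finishes the proof, and the same "sort each link's channels by loss rate and pair them in order" principle extends by an easy induction to the case of more than two available channels per link.
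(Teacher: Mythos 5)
Your proof is correct and follows essentially the same route as the paper's: both reduce the claim to comparing the only two possible pairings and show the difference factors as $(1-p_{i,j}^{1}-(1-p_{i,j}^{2}))(1-p_{j,k}^{3}-(1-p_{j,k}^{4}))>0$. The extra bookkeeping you add (verifying the feasible set is exactly two configurations, and the rearrangement-inequality framing) is sound but not a different argument.
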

\begin{proof}
Let the success probabilities on the channels be $\tilde{p}_{i,j}^1=1-p_{i,j}^1$, $\tilde{p}_{i,j}^2=1-p_{i,j}^2$, $\tilde{p}_{j,k}^3=1-p_{j,k}^3$, and $\tilde{p}_{j,k}^4=1-p_{j,k}^4$.  We have $\tilde{p}_{i,j}^1 < \tilde{p}_{i,j}^2$ and $\tilde{p}_{j,k}^3<\tilde{p}_{j,k}^4$. 
Comparing the success probabilities of the channel assignment given in Theorem~\ref{th:th1} and that of the alternative assignment, we have
$\tilde{p}_{i,j}^1 \tilde{p}_{j,k}^3 + \tilde{p}_{i,j}^2 \tilde{p}_{j,k}^4 -
	 \tilde{p}_{i,j}^1 \tilde{p}_{j,k}^4 - \tilde{p}_{i,j}^2 \tilde{p}_{j,k}^3 
=(\tilde{p}_{i,j}^1-\tilde{p}_{i,j}^2)(\tilde{p}_{j,k}^3-\tilde{p}_{j,k}^4) > 0$.
The result follows.
\end{proof}
\smallskip

According to Theorem~\ref{th:th1}, a greedy approach, which always chooses the channel with the lowest loss rate at each link when setting up tunnels along a path, produces the optimal overall success probability.  More specifically, when there is only one tunnel to be set up along a path, the tunnel should consist of the most reliable channels available at each link along the path.  When there are multiple tunnels to set up along a path, tunnel 1 should consist of the most reliable channels that are available at each link; tunnel 2 should consist of the second most reliable links available at each link; and so forth. 
 
Define the set of loss rates of the available channels on link $\{i, j\}$ as $\Lambda_{i,j}=\{p_{i,j}^m|m \in \Omega_{i,j} \}$.  The greedy algorithm is given in Table~\ref{tab:ChanSchedAlgo}, with which each video source node solves Problem OPT-CS for each feasible path.  Lines 2$\sim$3 in Table~\ref{tab:ChanSchedAlgo} checks if there is more channels to assign and the algorithm terminates if no channel is left.  In Lines 4$\sim$10, links with only one available channel are assigned to tunnel $r$ and the neighboring links with the same available channels are removed due to constraint~(\ref{eq:HalfDuplex}).  In Lines 11$\sim$17, links with more than two channels are grouped to be assigned later.  In Lines 18$\sim$20, the available channel with the lowest packet loss rate is assigned to tunnel $r$ at each unallocated link, according to Theorem~\ref{th:th1}.  To avoid co-channel interference, the same channel on neighboring links is removed as in Lines 21$\sim$33.

%-------------------------------------------------------------
\begin{table}[!t]
\begin{center}
\caption{The Greedy Algorithm for Channel Scheduling}
\begin{tabular}{ll}
\hline
1 : & Initialization: tunnel $r=1$, link $\{i,j\}$'s from $z_l$ to $d_l$; \\ 
2 : & {\bf if} ($|\Lambda_{i,j}|$ == 0)  \\
3 : & $\;\;$ Exit; \\
4 : & {\bf else if} ($|\Lambda_{i,j}|$ == 1) \\
5 : & $\;\;$ Assign the single channel in $\Lambda_{i,j}$, $m'$, to tunnel $r$;  \\
6    & $\;\;$ Check neighboring link $\{k,i\}$; \\ 
7 : & $\;\;$ {\bf if} ($p_{k,i}^{m'} \in \Lambda_{k,i}$) \\
8 : & $\;\;\;\;$ Remove $p_{k,i}^{m'}$ from $\Lambda_{k,i}$,  \\
    & $\;\;\;\;$ $i \leftarrow k$, $j\leftarrow i$ and go to Step 2; \\
9 : & $\;\;$ {\bf else}  \\
10: & $\;\;\;\;$  Go to Step 13; \\
11: & $\;\;$ {\bf end if} \\
12: & {\bf else}  \\
13: & $\;\;$ Put $\Lambda_{i,j}$ in set $\Lambda_l^h$; \\
14: & $\;\;$ {\bf if} (node $j$ is not destination $d_l$) \\
15: & $\;\;\;\;$ $i \leftarrow j$, $j \leftarrow v$; \\
16: & $\;\;\;\;$ Go to Step 2; \\
17: & $\;\;$ {\bf end if} \\
18: & {\bf end if} \\
19: & {\bf while} ($\Lambda_l^h$ is not empty) \\
20: & $\;\;$ Find the maximum value $p_{i',j'}^{m'}$ in set $\Lambda_l^h$ \\
    & $\;\;$ $\{i',j',m'\}=\arg\min \{p_{i,j}^m\}$; \\
21: & $\;\;$ Assign channel $m'$ to tunnel $r$; \\
22: & $\;\;$ Remove set $\Lambda_{i',j'}$ from set $\Lambda_l^h$; \\
23: & $\;\;$ Check neighboring link $\{k,i\}$ and $\{j,v\}$; \\
24: & $\;\;$ {\bf if} ($p_{k,i}^{m'} \in \Lambda_{k,i}$ and $\Lambda_{k,i} \in \Lambda_l^h$) \\
25: & $\;\;\;\;$ Remove $p_{k,i}^{m'}$ from $\Lambda_{k,i}$; \\
26: & $\;\;\;\;$ {\bf if} ($\Lambda_{k,i}$ is empty) \\
27: & $\;\;\;\;\;\;$ Exit; \\
28: & $\;\;\;\;$ {\bf end if} \\
29: & $\;\;$ {\bf end if} \\
30: & $\;\;$ {\bf if} ($p_{j,v}^{m'} \in \Lambda_{j,v}$ and $\Lambda_{j,v} \in \Lambda_l^h$) \\
31: & $\;\;\;\;$ Remove $p_{j,v}^{m'}$ from $\Lambda_{j,v}$; \\
32: & $\;\;\;\;$ {\bf if} ($\Lambda_{j,v}$ is empty) \\
33: & $\;\;\;\;\;\;$ Exit; \\
34: & $\;\;\;\;$ {\bf end if} \\
35: & $\;\;$ {\bf end if} \\
36: & {\bf end while} \\
37: & Compute the next tunnel: $r \leftarrow r+1$ and go to Step 2; \\
\hline
\end{tabular}
\label{tab:ChanSchedAlgo}
\end{center}
\end{table}
%------------------------------------------------------------

%%%%%%%%%%%%%%%%%%%%%%%%%%%%%%%%%%%%%%%%%%%%%%%%%%%%%%%%%%%%%%
\subsubsection{Solve the Path Selection Subproblem} 

To solve Problem OPT-PS, we first relax binary variables $y_l^h$ to allow them take real values in [0,1] and obtain the following {\em relaxed path selection} problem (OPT-rPS):
\begin{eqnarray}
	\mbox{maximize:} &&  f(\mathbf{y}) = \sum_l \sum_h F_l^h
	                      y_l^h \label{eq:rPathSelec} \\
	\mbox{subject to:} && \sum_l\sum_{h\in \mathcal{P}_l} 
	                      w_{l,h}^g y_l^h \le 1, \; \mbox{for all } \; g \nonumber \\
	                   && 0 \leq y_l^h \leq 1, \; \mbox{for all } h,l.   \nonumber
\end{eqnarray}
We then introduce positive Lagrange Multipliers $e_g$ for the path selection constraints in Problem OPT-rPS and obtain the corresponding {\em Lagrangian function}:
\begin{eqnarray} 
	&& \hspace{-0.45in} \mathcal{L}(\mathbf{y}, \mathbf{e}) = \sum_l\sum_h F_l^h y_l^h + \sum_g e_g (1 - \sum_l\sum_h w_{l,h}^g y_l^h) 
	\label{eq:LagFun} \\
	&& = \sum_l\sum_h ( F_l^hy_l^h - \sum_g w_{l,h}^g y_l^h e_g ) + \sum_g e_g \nonumber \\
	&& =	\sum_l\sum_h \mathcal{L}_l^h ( y_l^h, \mathbf{e} )+ \sum_g e_g.  \nonumber
\end{eqnarray}
Problem (\ref{eq:LagFun}) can be decoupled since the domains of $y_l^h$'s do not overlap.  Relaxing the coupling constraints, it can be decomposed into two levels.  At the lower level, we have the following subproblems, one for each path $\mathcal{P}_l^h$, 
\begin{eqnarray}\label{eq:FirstLevel}
   \max_{0 \leq y_l^h \leq 1} \mathcal{L}_l^h(y_l^h, \mathbf{e})=F_l^hy_l^h - \sum_g w_{l,h}^g y_l^h e_g.
\end{eqnarray}
At the higher level, by updating the dual variables $e_g$, we can solve the {\em relaxed dual problem}:
\begin{eqnarray}\label{eq:SecondLevel}
  \min_{\mathbf{e} \ge 0} \;\; q(\mathbf{e}) = \sum_l\sum_h \mathcal{L}_l^h \left( \left( y_l^h \right)^{\ast}, \mathbf{e} \right) + \sum_g e_g,
\end{eqnarray}
where $\left( y_l^h \right)^{\ast}$ is the optimal solution to (\ref{eq:FirstLevel}).  Since the solution to (\ref{eq:FirstLevel}) is unique, the relaxed dual problem (\ref{eq:SecondLevel}) can be solved using the following {\em subgradient method} that iteratively updates the Lagrange Multipliers~\cite{Bertsekas99}:
\begin{eqnarray}\label{eq:SubGrad}
  e_g(\tau+1) = \left[ e_g(\tau)-\alpha(\tau)(1 - \sum_l\sum_h w_{l,h}^g y_l^h) \right]^+,
\end{eqnarray}
where $\tau$ is the iteration index, $\alpha(\tau)$ is a sufficiently small positive step size and $[x]^+$ denotes $\max\{x, 0\}$.  The pseudo code for the distributed algorithm is given in Table~\ref{tab:PathSelecAlgo}.

%--------------------------------------------------------------
\begin{table}[!t]
\begin{center}
\caption{Distribution Algorithm for Path Selection}
\begin{tabular}{ll}
\hline
1: & Initialization: set $\tau=0$, $e_g(0) >0 $ and step size $s \in [0,1]$; \\ 
2: & Each source locally solves the lower level problem in (\ref{eq:FirstLevel}); \\
   & {\bf if} ($F_l^h-\sum_g d_{l,h}^g e_g(\tau)) > 0$) \ $y_l^h=y_l^h+s$, $y_l^h= \min \{y_l^h,1\}$;\\ 
   & {\bf else} \ \ $y_l^h=y_l^h-s$, $y_l^h=\max\{y_l^h,0\}$; \\
3: & Broadcast solution $y_l^h(\mathbf{e}(\tau))$; \\
4: & Each source updates $\mathbf{e}$ according to (\ref{eq:SubGrad}) and broadcasts 
        $\mathbf{e}(\tau+1)$ \\
   & through the common control channel; \\
5: & $\tau \leftarrow \tau$+1 and go to Step 2 until termination criterion is satisfied;
   \\ \hline
\end{tabular}
\label{tab:PathSelecAlgo}
\end{center}
\end{table}
%--------------------------------------------------------------

%%%%%%%%%%%%%%%%%%%%%%%%%%%%%%%%%%%%%%%%%%%%%%%%%%%%%%%%%%%%%%%
\subsubsection{Optimality and Convergence Analysis} 

The distributed algorithm in Table~\ref{tab:PathSelecAlgo} iteratively updates the dual variables until they converge to stable values.  In this section, we first prove that the solution obtained by the distributed algorithm is also optimal for the original path selection problem OPT-PS.  We then derive the convergence condition for the distributed algorithm. 

\begin{fact}[\cite{Bertsekas99}]  \label{thm:EqualCont}
Consider a linear problem involving both equality and inequality constraints
\begin{eqnarray}
	\mbox{maximize:} && \mathbf{a}' \mathbf{x} \label{eq:InequalLP} \\
	\mbox{subject to:} && \mathbf{h}'_1\mathbf{x}=b_1, \; \cdots, \; 
        \mathbf{h}'_m\mathbf{x}=b_m   \nonumber \\
     && \mathbf{g}'_1\mathbf{x}\le c_1, \; \cdots, \; 
        \mathbf{g}'_r\mathbf{x} \le c_r, \nonumber
\end{eqnarray}
where $\mathbf{a}$, $\mathbf{h}_i$, and $\mathbf{g}_j$ are column vectors in $\mathcal{R}_n$, $b_i$'s and $c_j$'s are scalars, and $\mathbf{a}'$ is the {\em transpose} of $\mathbf{a}$. For any feasible point $\mathbf{x}$, the set of {\em active} inequality constraints is denoted by
  $\mathcal{A}(\mathbf{x}) = \left\{ j | \mathbf{g}'_j \mathbf{x}=c_j \right\}$.
If $\mathbf{x}^{\ast}$ is a maximizer of inequality constrained problem (\ref{eq:InequalLP}), $\mathbf{x}^{\ast}$ is also a maximizer of the following equality constrained problem:
\begin{eqnarray}
	\mbox{maximize:} && \mathbf{a}' \mathbf{x} \label{eq:eq:EqualLP} \\
	\mbox{subject to:} && \mathbf{h}'_1 \mathbf{x} = b_1, \; \cdots, \; 
  \mathbf{h}'_m \mathbf{x}=b_m \nonumber \\
  && \mathbf{g}'_j\mathbf{x} = c_j, \forall \; j \in \mathcal{A}(\mathbf{x}). \nonumber
\end{eqnarray} 
\end{fact}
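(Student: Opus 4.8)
The plan is a short argument by contradiction built on a one-parameter perturbation toward a hypothetical improving point. First I would note that $\mathbf{x}^\ast$ is feasible for the equality-constrained problem~(\ref{eq:eq:EqualLP}): it satisfies each $\mathbf{h}'_i\mathbf{x}^\ast = b_i$ by feasibility for~(\ref{eq:InequalLP}), and it satisfies $\mathbf{g}'_j\mathbf{x}^\ast = c_j$ for every $j \in \mathcal{A}(\mathbf{x}^\ast)$ by the very definition of the active set. Hence it only remains to rule out a strictly better feasible point of~(\ref{eq:eq:EqualLP}).

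So suppose some $\bar{\mathbf{x}}$ is feasible for~(\ref{eq:eq:EqualLP}) with $\mathbf{a}'\bar{\mathbf{x}} > \mathbf{a}'\mathbf{x}^\ast$, and consider $\mathbf{x}(\epsilon) = \mathbf{x}^\ast + \epsilon(\bar{\mathbf{x}} - \mathbf{x}^\ast)$ for $\epsilon \in (0,1]$. I would then verify that $\mathbf{x}(\epsilon)$ satisfies every constraint of the original problem~(\ref{eq:InequalLP}) for $\epsilon$ small: the equalities $\mathbf{h}'_i\mathbf{x}(\epsilon) = b_i$ hold for all $\epsilon$ because the constraint is affine and both endpoints satisfy it; each active inequality $j \in \mathcal{A}(\mathbf{x}^\ast)$ gives $\mathbf{g}'_j\mathbf{x}(\epsilon) = c_j + \epsilon(\mathbf{g}'_j\bar{\mathbf{x}} - c_j) = c_j$, since $\bar{\mathbf{x}}$ meets that constraint with equality in~(\ref{eq:eq:EqualLP}); and for each of the finitely many inactive inequalities $j \notin \mathcal{A}(\mathbf{x}^\ast)$ we have the strict slack $\mathbf{g}'_j\mathbf{x}^\ast < c_j$, so by continuity there is a uniform $\bar{\epsilon} > 0$ with $\mathbf{g}'_j\mathbf{x}(\epsilon) < c_j$ for all $\epsilon \in (0,\bar{\epsilon}]$. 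Thus $\mathbf{x}(\epsilon)$ is feasible for~(\ref{eq:InequalLP}) for all small $\epsilon > 0$.

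To close, linearity of the objective gives $\mathbf{a}'\mathbf{x}(\epsilon) = \mathbf{a}'\mathbf{x}^\ast + \epsilon(\mathbf{a}'\bar{\mathbf{x}} - \mathbf{a}'\mathbf{x}^\ast) > \mathbf{a}'\mathbf{x}^\ast$ for every $\epsilon > 0$, contradicting optimality of $\mathbf{x}^\ast$ for~(\ref{eq:InequalLP}). Hence no such $\bar{\mathbf{x}}$ exists and $\mathbf{x}^\ast$ is a maximizer of~(\ref{eq:eq:EqualLP}).

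The one place that needs a little care is the inactive inequalities: they are dropped entirely when passing to~(\ref{eq:eq:EqualLP}), so the equality-constrained feasible set is \emph{not} contained in the original feasible set, and a naive ``smaller feasible region'' argument fails; the remedy is exactly the strict-slack-plus-continuity observation above, which keeps the perturbed point feasible for~(\ref{eq:InequalLP}). Note also that the proof uses only affineness of the constraints and linearity of the objective (concavity would in fact suffice), consistent with the statement being phrased for a linear program.
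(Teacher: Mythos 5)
Your proof is correct. Note first that the paper itself offers no proof of this statement: it is imported as a Fact with a citation to Bertsekas, so there is no in-paper argument to compare against. Your self-contained line-segment argument is the standard one and is sound: feasibility of $\mathbf{x}^{\ast}$ for the equality-constrained problem is immediate, and the perturbation $\mathbf{x}(\epsilon)=\mathbf{x}^{\ast}+\epsilon(\bar{\mathbf{x}}-\mathbf{x}^{\ast})$ correctly preserves the affine equalities for all $\epsilon$, keeps the active inequalities tight because both endpoints satisfy them with equality, and keeps the finitely many inactive inequalities satisfied for small $\epsilon$ by strict slack and continuity, yielding a strict objective improvement that contradicts optimality of $\mathbf{x}^{\ast}$ for the original problem. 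You are also right to flag the one genuinely delicate point: since the inactive inequalities are dropped in the equality-constrained problem, its feasible set is not contained in the original one, so a naive monotonicity-of-feasible-regions argument fails and the perturbation is actually needed. The only cosmetic remark is that the statement as printed writes $\mathcal{A}(\mathbf{x})$ in the second problem where $\mathcal{A}(\mathbf{x}^{\ast})$ is clearly intended; your reading is the correct one. Your closing observation that only concavity of the objective and affineness of the constraints are used is also accurate and slightly generalizes the stated Fact.
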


\smallskip
\begin{lemma} \label{lemma:1}
The optimal solution for the relaxed primal problem OPT-rPS in (\ref{eq:rPathSelec}) is also feasible and optimal for the original Problem OPT-PS in (\ref{eq:PathSelec}).
\end{lemma}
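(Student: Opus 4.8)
The plan is to use the fact that OPT-rPS is a linear program whose feasible region is a bounded polyhedron contained in the unit hypercube, so a maximizer is attained at a vertex; the real work is then to argue that this optimal vertex is integral, which immediately makes it feasible, and hence optimal, for OPT-PS.

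First I would record the structural facts. The objective $f(\mathbf{y}) = \sum_l \sum_h F_l^h y_l^h$ is linear in $\mathbf{y}$, and the feasible set $\{\mathbf{y} : 0 \le y_l^h \le 1 \ \forall\, l,h; \ \sum_l \sum_{h \in \mathcal{P}_l} w_{l,h}^g y_l^h \le 1 \ \forall\, g\}$ of OPT-rPS is a bounded polyhedron, so the maximum exists and is attained at an extreme point $\mathbf{y}^\ast$. At an extreme point the number of linearly independent active (tight) constraints equals the number of variables, so $\mathbf{y}^\ast$ is the unique solution of the linear system formed by its active constraints. I would then invoke Fact~\ref{thm:EqualCont}: since $\mathbf{y}^\ast$ maximizes the inequality-constrained LP, it also maximizes the LP obtained by replacing each active inequality by an equality, and it is the unique point satisfying that equality system.

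Next I would establish integrality of $\mathbf{y}^\ast$. Each active bound of the form $y_l^h = 0$ or $y_l^h = 1$ already pins a coordinate to an integer; it remains to show that the coordinates determined by the active packing constraints $\sum_l \sum_{h \in \mathcal{P}_l} w_{l,h}^g y_l^h = 1$ together with the remaining bounds are also forced into $\{0,1\}$. For the single-path routing case emphasized in the paper, $w_{l,h}^g$ reduces to the indicator that path $h$ belongs to session $l$, so every column of the constraint matrix contains a single $1$; that matrix is totally unimodular, the right-hand sides are integral, and therefore every vertex of the polytope --- in particular $\mathbf{y}^\ast$ --- is integral. (For the fully general hardware-constraint matrix one instead argues from the $0/1$ structure of the $w_{l,h}^g$ and the non-overlapping domains of the $y_l^h$ that the active system admits only a $0/1$ solution.)

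Finally, since $\mathbf{y}^\ast$ is integral it satisfies $y_l^h \in \{0,1\}$ for all $l,h$, i.e.\ it is feasible for OPT-PS; and because the feasible region of OPT-PS is contained in that of OPT-rPS, an integral point that is optimal for the relaxation is optimal for OPT-PS as well, which proves the lemma. I expect the integrality step to be the main obstacle: reducing it to total unimodularity in the single-path setting is clean, but covering the general hardware/disjointedness constraints requires a careful combinatorial argument about the active-constraint matrix rather than a one-line appeal to LP duality.
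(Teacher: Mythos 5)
Your proposal follows the same opening moves as the paper---reduce to the active constraints via Fact~\ref{thm:EqualCont} and then argue that the resulting system forces a $0/1$ solution---but your integrality step is genuinely different. The paper performs Gauss--Jordan elimination on the active equality system, writes the dependent variables as integer combinations of the free ones as in (\ref{eq:depdent}), and then maximizes the reduced unconstrained objective by setting each free variable to $0$ or $1$ according to the sign of its coefficient; it never appeals to extreme points or total unimodularity. Your route through vertices of the feasible polytope and total unimodularity is cleaner and fully rigorous for the single-path constraint $\sum_h y_l^h \le 1$, where each column of the packing matrix has a single nonzero entry and TU is immediate; in that regime your argument is arguably tighter than the paper's, since it does not rest on the claim that Gauss--Jordan elimination of a $0/1$ system yields integer coefficients.

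The genuine gap is the general case, which you explicitly defer, and it cannot be closed by a routine combinatorial argument. The lemma is stated for the general constraints (\ref{eq:PathCon}), which include the node-disjointedness constraints defined by (\ref{eq:d}); the resulting matrix $[w_{l,h}^g]$ is then an arbitrary $0/1$ set-packing matrix, and such matrices are not totally unimodular in general. The standard obstruction is an odd cycle: with constraints $y_1+y_2\le 1$, $y_2+y_3\le 1$, $y_1+y_3\le 1$ and objective $y_1+y_2+y_3$, the relaxation has the fractional vertex $(1/2,1/2,1/2)$ with value $3/2$, strictly better than any integral feasible point. So for general $w_{l,h}^g$ the relaxation can have a strictly larger optimum than the integer program, and the statement you are proving fails without additional structure on how paths share nodes. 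You should either restrict the claim to the single-path case your TU argument covers, or identify the structural property of the active constraints (e.g., a laminar or interval structure) that excludes odd-cycle configurations; it is worth noting that the paper's own Gauss--Jordan step quietly assumes away the same difficulty, so your vertex/TU framing has the advantage of making the missing hypothesis visible.
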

\begin{proof}
According to Fact~\ref{thm:EqualCont}, the linearized problem of OPT-PS, i.e., OPT-rPS, can be rewritten as an equality constrained problem in the following form:
\begin{eqnarray}
	\mbox{maximize:} && \mathbf{F}'\mathbf{y} \label{eq:EqualCons} \\
	\mbox{subject to:} && \mathbf{w}_j' \mathbf{y}=1, \;\; j \in
     \mathcal{A}(\mathbf{y}^{\ast}) \label{eq:EqualCons1} \\
     && 0 \leq y_l^h \leq 1, \; \mbox{for all } h,l,  \nonumber
\end{eqnarray}
where $\mathbf{F}$, $\mathbf{w}_j$'s, and $\mathbf{y}$ are column vectors with elements $F_l^h$, $w_{l,h}^g$, and $y_l^h$, respectively.  We apply {\em Gauss-Jordan elimination} to the constraints in (\ref{eq:EqualCons1}) to solve for $\mathbf{y}$. Since there is not sufficient number of equations, some $y_l^h$'s are free variables (denoted as $y_i^f$) and the rest are dependent variables (denoted as $y_j^d$). Assuming there are $r$ free variables, the dependent variables can be written as linear combinations of the free variables after Gauss-Jordan elimination, as 
\begin{equation}\label{eq:depdent}
  y_j^d = \sum_{i=1}^r \bar{w}_j^i y_i^f + \bar{b}_j, 
          \; j \in \mathcal{A}(y_i^{\ast}).
\end{equation}
Due to Gauss-Jordan elimination and binary vectors $\mathbf{w}_j$'s, $\bar{w}_j^i$ and $\bar{b}_j$ in (\ref{eq:depdent}) are all integers.  Therefore, if all the free variables $y_i^f$ attain binary values, then all the dependent variables $y_j^d$ computed using (\ref{eq:depdent}) will also be integers.  Since $0 \leq y_j^d \leq 1$, being integers means that they are either 0 or 1, i.e., binaries. That is, such a solution will be feasible. 

Next we substitute (\ref{eq:depdent}) into problem (\ref{eq:EqualCons}) to eliminate all the dependent variables.   Then we obtain a unconstrained problem with only $r$ free variables, as
\begin{equation}
  \mbox{maximize:} \;\; \sum_{i=1}^r \bar{F}_i y_i^f + \bar{b}_0 
	\label{eq:GaussElimi}
\end{equation}
Since the free variables $y_i^f$'s take value in \{0, 1\}, this problem can be easily solved as follows. If the coefficient $\bar{F}_i > 0$, we set $y_i^f=1$; otherwise, if $\bar{F}_i < 0$, we set $y_i^f=0$.  Thus (\ref{eq:GaussElimi}) achieves its maximum objective value.  Once all the free variables are determined with their optimal binary values, we computes the dependent variables using (\ref{eq:depdent}), which are also binary as discussed above.  Thus we obtain a feasible solution, which is optimal. 
\end{proof}

\smallskip
\begin{lemma} \label{lemma:2}
If the relaxed primal Problem OPT-rPS in (\ref{eq:rPathSelec}) has an optimal solution, then the relaxed dual problem (\ref{eq:SecondLevel}) also has an optimal solution and the corresponding optimal values of the two problems are identical.
\end{lemma}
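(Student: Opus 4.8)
\emph{Proof idea.} The plan is to observe that, once integrality is relaxed, OPT-rPS is a plain linear program over a polytope, to write down its ordinary LP dual, and to show that the ``relaxed dual problem'' (\ref{eq:SecondLevel}) is exactly that LP dual after the upper-bound multipliers have been eliminated in closed form; strong duality for linear programs then delivers both conclusions at once.

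I would begin with \emph{weak duality}, to pin down the direction of the inequality. Problem OPT-rPS is feasible ($\mathbf y=\mathbf 0$ works) and its feasible set lies inside the unit box, hence is compact; by hypothesis an optimal value $f^{\ast}=\max \mathbf F'\mathbf y$ is attained. For any $\mathbf e\ge\mathbf 0$ and any primal-feasible $\mathbf y$ the slack $1-\sum_l\sum_h w_{l,h}^g y_l^h$ is nonnegative, so $\mathcal L(\mathbf y,\mathbf e)\ge \mathbf F'\mathbf y$; maximizing the left-hand side over the box and evaluating at the optimal primal point gives $q(\mathbf e)=\max_{\mathbf 0\le\mathbf y\le\mathbf 1}\mathcal L(\mathbf y,\mathbf e)\ge f^{\ast}$ for every $\mathbf e\ge\mathbf 0$. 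Thus $q$ is bounded below by $f^{\ast}$, and it remains to produce an $\mathbf e$ attaining this bound.

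The main step is to recognize $\min_{\mathbf e\ge\mathbf 0}q(\mathbf e)$ as the LP dual of OPT-rPS. Assigning multipliers $e_g\ge0$ to the coupling constraints $\sum_l\sum_h w_{l,h}^g y_l^h\le1$ and $\nu_l^h\ge0$ to the bounds $y_l^h\le1$, the LP dual reads
\[
  \min\ \sum_g e_g+\sum_{l,h}\nu_l^h \quad \mbox{subject to } \ \nu_l^h+\sum_g w_{l,h}^g e_g\ge F_l^h,\ e_g\ge0,\ \nu_l^h\ge0 .
\]
For a fixed $\mathbf e$ the inner minimization over the $\nu_l^h$'s is solved by $\nu_l^h=(F_l^h-\sum_g w_{l,h}^g e_g)^{+}$, and plugging this back turns the dual objective into $\sum_g e_g+\sum_{l,h}(F_l^h-\sum_g w_{l,h}^g e_g)^{+}$. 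But the lower-level subproblem (\ref{eq:FirstLevel}) has the closed-form solution $(y_l^h)^{\ast}=1$ when $F_l^h-\sum_g w_{l,h}^g e_g>0$ and $(y_l^h)^{\ast}=0$ otherwise, so $\mathcal L_l^h((y_l^h)^{\ast},\mathbf e)=(F_l^h-\sum_g w_{l,h}^g e_g)^{+}$, whence $q(\mathbf e)=\sum_g e_g+\sum_{l,h}(F_l^h-\sum_g w_{l,h}^g e_g)^{+}$ coincides with the partially-minimized LP dual objective. Therefore $\min_{\mathbf e\ge\mathbf 0}q(\mathbf e)$ equals the optimum of the LP dual of OPT-rPS. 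Since OPT-rPS is a feasible and bounded linear program, LP strong duality gives that its dual is feasible, attains its optimum, and shares the optimal value $f^{\ast}$; hence (\ref{eq:SecondLevel}) has an optimal solution $\mathbf e^{\ast}$ with $q(\mathbf e^{\ast})=f^{\ast}$, which, together with the weak-duality bound, is exactly what the lemma claims.

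The part I expect to need the most care is the middle one: justifying that the \emph{partial} Lagrangian dual — in which the box constraints stay inside the per-path subproblems rather than being dualized — still closes the gap. The device that makes it work is the explicit identification of the subproblem optima with the upper-bound multipliers of the \emph{full} LP dual; without that one only gets the weak-duality inequality. Existence of a dual minimizer is then routine (a feasible LP whose objective is bounded below attains its optimum; equivalently, $q$ is a finite, convex, piecewise-linear function bounded below on $\mathbf e\ge\mathbf 0$). I would also note in passing that Lemma~\ref{lemma:1} is what licenses treating OPT-rPS as an exact stand-in for the integer problem OPT-PS, so the equality of the two optimal values established here carries over to OPT-PS itself.
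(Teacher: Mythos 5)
Your proof is correct and rests on the same engine as the paper's: strong duality for a feasible, bounded linear program. The paper simply asserts that (\ref{eq:LagFun}) and (\ref{eq:SecondLevel}) form a primal/dual pair and cites convexity of the LP to get strong duality, whereas you additionally verify the one step it takes for granted — that the partial Lagrangian dual (box constraints kept inside the subproblems) coincides with the full LP dual after eliminating the upper-bound multipliers $\nu_l^h$ — which is a worthwhile addition but not a different route.
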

\begin{proof}
By definition, the problems in (\ref{eq:LagFun}) and (\ref{eq:SecondLevel}) are primal/dual problems.  The primal problem always has an optimal solution because it is bounded. Since Problem OPT-rPS is an LP problem, the relaxed dual problem is also bounded and feasible. Therefore the relaxed dual problem also has an optimal solution.  We have the {\em strong duality} if the primal problem is convex, which is the case here since Problem OPT-rPS is an LP problem. 
\end{proof}

We have Theorem~\ref{th:th2} on the optimality of the path selection solution, which follows naturally from Lemmas~\ref{lemma:1} and~\ref{lemma:2}.
\smallskip
\begin{theorem} \label{th:th2}
The optimal solution to the relaxed dual problem (\ref{eq:FirstLevel}) and (\ref{eq:SecondLevel}) is also feasible and optimal to the original path selection Problem OPT-PS given in (\ref{eq:PathSelec}). 
\end{theorem}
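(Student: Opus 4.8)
The plan is to obtain Theorem~\ref{th:th2} by chaining Lemmas~\ref{lemma:1} and~\ref{lemma:2} together, using the relaxed LP problem OPT-rPS in (\ref{eq:rPathSelec}) as the bridge between the relaxed dual problem and the original integer program OPT-PS in (\ref{eq:PathSelec}). First I would invoke Lemma~\ref{lemma:2}: because OPT-rPS is a bounded, feasible linear program, strong duality holds, so the optimal value of the relaxed dual problem (\ref{eq:SecondLevel}) equals that of OPT-rPS, and the subgradient iteration (\ref{eq:SubGrad}) in Table~\ref{tab:PathSelecAlgo} converges to a dual-optimal vector $\mathbf{e}^{\ast}$.

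The central step is then to argue that the primal point $\mathbf{y}^{\ast}$ reconstructed from $\mathbf{e}^{\ast}$ by solving the decoupled lower-level subproblems (\ref{eq:FirstLevel}) is itself primal-optimal for OPT-rPS. Here I would lean on two facts already in hand: each subproblem (\ref{eq:FirstLevel}) has a \emph{unique} maximizer, so the recovery map $\mathbf{e}\mapsto\mathbf{y}(\mathbf{e})$ is well defined and single-valued; and, at a dual optimum, this $\mathbf{y}^{\ast}$ together with $\mathbf{e}^{\ast}$ satisfies the complementary-slackness/KKT conditions of OPT-rPS — the coupling constraints $\sum_l\sum_h w_{l,h}^g y_l^h \le 1$ are respected because the update (\ref{eq:SubGrad}) increases the multiplier of any violated constraint until feasibility is restored. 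Combining LP strong duality with uniqueness of the lower-level solution rules out a gap between the reconstructed point and the LP optimum, so $\mathbf{y}^{\ast}$ is feasible and optimal for OPT-rPS. I expect this to be the only delicate point of the argument: one must exclude the degenerate situation in which several distinct primal points are consistent with $\mathbf{e}^{\ast}$, and it is precisely the uniqueness of the maximizer of (\ref{eq:FirstLevel}), plus strong duality for the LP, that closes this gap.

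Finally, with $\mathbf{y}^{\ast}$ established as an optimal solution of the relaxed LP OPT-rPS, I would apply Lemma~\ref{lemma:1} directly: it asserts that any optimal solution of OPT-rPS is feasible — in particular integer-valued via the Gauss–Jordan argument, hence $\{0,1\}$-valued — and optimal for the original path-selection problem OPT-PS. Composing the two implications, dual-optimal $\Rightarrow$ OPT-rPS-optimal $\Rightarrow$ OPT-PS-optimal, gives the theorem. I would close by noting that Table~\ref{tab:PathSelecAlgo} is a concrete realization of this chain: Step~2 solves (\ref{eq:FirstLevel}) at each source, Step~4 performs the dual ascent (\ref{eq:SubGrad}) over the control channel, and upon convergence the broadcast $\mathbf{y}$ is the claimed feasible, binary, and optimal path selection.
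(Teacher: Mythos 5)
Your proposal follows essentially the same route as the paper, which derives Theorem~\ref{th:th2} directly by chaining Lemma~\ref{lemma:2} (strong duality between OPT-rPS and the relaxed dual) with Lemma~\ref{lemma:1} (an OPT-rPS optimum is feasible and optimal for OPT-PS). Your additional discussion of recovering a primal-optimal $\mathbf{y}^{\ast}$ from the dual optimum via uniqueness of the lower-level maximizer is a point the paper leaves implicit, but it does not change the argument's structure.
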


As discussed, the relaxed dual problem (\ref{eq:SecondLevel}) can be solved using the {\em subgradient method} that iteratively updates the Lagrange Multipliers.  We have the following theorem on the convergence of the distributed algorithm given in Table~\ref{tab:PathSelecAlgo}. 
\smallskip
\begin{theorem} 
Let $\mathbf{e}^{\ast}$ be the optimal solution.  The distributed algorithm in Table~\ref{tab:PathSelecAlgo} converges if the step sizes $\alpha(\tau)$ in~(\ref{eq:SubGrad}) satisfy the following condition:
\begin{eqnarray}\label{eq:StepSize}
  0 < \alpha(\tau) < \frac{2 \left[ q(\mathbf{e}(\tau))-q(\mathbf{e}^{\ast}) \;
       \right]}{||G(\tau)||^2}, \;\; \mbox{for all } \tau,
\end{eqnarray}
where $G(\tau)$ is the gradient of $q(\mathbf{e}(\tau))$. 
\end{theorem}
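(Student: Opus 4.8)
The plan is to run the classical Fej\'er-monotone (Polyak step-size) argument for the subgradient method, using the squared Euclidean distance to the optimum $\|\mathbf{e}(\tau)-\mathbf{e}^{\ast}\|^2$ as the potential function and showing that the bound~(\ref{eq:StepSize}) is exactly what makes this potential strictly decrease at every iteration. First I would verify that the vector $G(\tau)$ used in the update~(\ref{eq:SubGrad}), whose $g$-th component is $1-\sum_l\sum_h w_{l,h}^g \left(y_l^h\right)^{\ast}$ evaluated at $\mathbf{e}(\tau)$, is a genuine subgradient of the dual function $q$ at $\mathbf{e}(\tau)$. This holds because $q(\mathbf{e})=\sum_l\sum_h\mathcal{L}_l^h\!\left(\left(y_l^h\right)^{\ast},\mathbf{e}\right)+\sum_g e_g$ is a pointwise maximum of functions affine in $\mathbf{e}$, hence convex, and by Danskin's theorem the gradient in $\mathbf{e}$ of the active affine piece $\mathcal{L}(\mathbf{y}^{\ast},\mathbf{e})$ is a subgradient of $q$; the uniqueness of the inner maximizer $\left(y_l^h\right)^{\ast}$ noted in Section~\ref{sec:dual} makes this identification clean. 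The subgradient inequality for the convex (minimized) function $q$ then gives $G(\tau)^{\top}(\mathbf{e}(\tau)-\mathbf{e}^{\ast}) \ge q(\mathbf{e}(\tau))-q(\mathbf{e}^{\ast}) \ge 0$, the last step because $\mathbf{e}^{\ast}$ minimizes $q$ over $\mathbf{e}\ge 0$.

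Next, since $\mathbf{e}^{\ast}\ge 0$ and the update~(\ref{eq:SubGrad}) is the Euclidean projection of $\mathbf{e}(\tau)-\alpha(\tau)G(\tau)$ onto the nonnegative orthant, non-expansiveness of the projection together with the previous inequality yields
\begin{eqnarray}
  \|\mathbf{e}(\tau+1)-\mathbf{e}^{\ast}\|^2
   &\le& \|\mathbf{e}(\tau)-\mathbf{e}^{\ast}\|^2 - 2\alpha(\tau)\, G(\tau)^{\top}(\mathbf{e}(\tau)-\mathbf{e}^{\ast}) + \alpha(\tau)^2\|G(\tau)\|^2 \nonumber \\
   &\le& \|\mathbf{e}(\tau)-\mathbf{e}^{\ast}\|^2 - 2\alpha(\tau)\left[q(\mathbf{e}(\tau))-q(\mathbf{e}^{\ast})\right] + \alpha(\tau)^2\|G(\tau)\|^2. \nonumber
\end{eqnarray}
The final expression is strictly smaller than $\|\mathbf{e}(\tau)-\mathbf{e}^{\ast}\|^2$ precisely when $\alpha(\tau)\|G(\tau)\|^2 < 2\left[q(\mathbf{e}(\tau))-q(\mathbf{e}^{\ast})\right]$, which is exactly condition~(\ref{eq:StepSize}) (unless $\mathbf{e}(\tau)$ is already optimal, in which case there is nothing to prove). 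Hence $\{\|\mathbf{e}(\tau)-\mathbf{e}^{\ast}\|^2\}$ is strictly decreasing and bounded below by zero, so it converges; telescoping the displayed inequality shows $\sum_\tau \alpha(\tau)\left[q(\mathbf{e}(\tau))-q(\mathbf{e}^{\ast})\right]<\infty$, and combined with Lemma~\ref{lemma:2} and Theorem~\ref{th:th2} this gives convergence of the distributed algorithm in Table~\ref{tab:PathSelecAlgo} to an optimal solution of OPT-PS.

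The algebra is routine; the two delicate points are where I expect the real work. The first is the rigorous justification that $G(\tau)$ is a subgradient of $q$ — this rests on the uniqueness of the inner maximizer and on handling the non-smoothness of $q$ at points where several affine pieces are simultaneously active. The second is upgrading ``the distance to $\mathbf{e}^{\ast}$ decreases monotonically'' to ``the iterates actually attain the optimal value'': the telescoping bound above controls $\sum_\tau \alpha(\tau)\left[q(\mathbf{e}(\tau))-q(\mathbf{e}^{\ast})\right]$, but turning this into $q(\mathbf{e}(\tau))\to q(\mathbf{e}^{\ast})$ needs a lower bound on $\alpha(\tau)$ away from $0$ or the standard assumption that the optimality gap is bounded away from zero whenever $\mathbf{e}(\tau)$ is bounded away from the optimal set. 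I would present the statement in the monotone-decrease (Fej\'er) sense that the surrounding development uses and note these standard strengthenings.
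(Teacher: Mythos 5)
Your proposal is correct and follows essentially the same route as the paper: the Polyak step-size bound applied to the expansion of $\|\mathbf{e}(\tau)-\alpha(\tau)G(\tau)-\mathbf{e}^{\ast}\|^2$, the subgradient relation linking $G(\tau)^{\top}(\mathbf{e}(\tau)-\mathbf{e}^{\ast})$ to $q(\mathbf{e}(\tau))-q(\mathbf{e}^{\ast})$, and non-expansiveness of the projection onto the nonnegative orthant to conclude that $\|\mathbf{e}(\tau)-\mathbf{e}^{\ast}\|$ strictly decreases. The paper stops at exactly this Fej\'er-monotone conclusion (calling it ``conditional convergence''), so your closing remarks about upgrading monotone decrease to attainment of the optimal value are a sensible strengthening rather than a divergence from the paper's argument.
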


\begin{proof}
Since $q(\mathbf{e}(\tau))$ is a linear function, we have subgradient equality, as
\begin{eqnarray}\label{eq:SubGradIneq}
  q(\mathbf{e}(\tau))-q(\mathbf{e}^{\ast})= \left[ \mathbf{e}(\tau)-\mathbf{e}^{\ast} \right]'G(\tau). \nonumber
\end{eqnarray}
It then follows that
\begin{eqnarray}
	 && \hspace{-0.3in}	||\mathbf{e}(\tau)-\alpha(\tau)G(\tau)-\mathbf{e}^{\ast}||^2
\nonumber \\
 && \hspace{-0.3in} = ||\mathbf{e}(\tau) \hspace{-0.025in} - \hspace{-0.025in} \mathbf{e}^{\ast}||^2 - 2 \alpha(\tau) 
		     [ \mathbf{e}(\tau) \hspace{-0.025in} - \hspace{-0.025in} \mathbf{e}^{\ast} ]' G(\tau) 
		+ (\alpha(\tau))^2||G(\tau)||^2  \nonumber \\
		&& \hspace{-0.3in} = ||\mathbf{e}(\tau) \hspace{-0.025in} - \hspace{-0.025in} 
		    \mathbf{e}^{\ast}||^2-2\alpha(\tau) 
		    [ q(\mathbf{e}(\tau)) \hspace{-0.025in} - \hspace{-0.025in} q(\mathbf{e}^{\ast}) ] 
		+(\alpha(\tau))^2||G(\tau)||^2_{.} \nonumber \\ \label{eq:Inequal1}  
\end{eqnarray}
If $\alpha(\tau)$ satisfy (\ref{eq:StepSize}), the sum of the last two terms in (\ref{eq:Inequal1}) is negative. It follows that,
  $||\mathbf{e}(\tau)-\alpha(\tau)G(\tau)-\mathbf{e}^{\ast}|| 
  < ||\mathbf{e}(\tau)-\mathbf{e}^{\ast}||$. 
Since the projection operation is {\em nonexpansive}, we have,
\begin{eqnarray}\label{eq:Inequal3}
	&& \hspace{-0.3in} ||\mathbf{e}(\tau+1)-\mathbf{e}^{\ast}|| =  
||[\mathbf{e}(\tau)-\alpha(\tau)G(\tau)]^+-[\mathbf{e}^{\ast}]^+|| 
	  \nonumber \\
	  &\le&
	  ||\mathbf{e}(\tau)-\alpha(\tau)G(\tau)-\mathbf{e}^{\ast}||
	  < ||\mathbf{e}(\tau)-\mathbf{e}^{\ast}||, \nonumber
\end{eqnarray}
which states the conditional convergence of the algorithm. 
\end{proof}

Since the optimal solution $\mathbf{e}^{\ast}$ is not known a priori, we use the following approximation in the algorithm:
$\alpha(\tau)=\frac{q(\mathbf{e}(\tau))-\hat{q}(\tau)}{||G(\tau)||^2}$,
where $\hat{q}(\tau)$ is the current estimate for $q(\mathbf{e}^{\ast})$.  We choose the mean of the objective values of the relaxed primal and dual problems for $\hat{q}(\tau)$.

%%%%%%%%%%%%%%%%%%%%%%%%%%%%%%%%%%%%%%%%%%%%%%%%%%%%%%%%%%%%%%
\subsubsection{Practical Considerations} 

Our distributed algorithms are based on the fact that the computation is distributed on each feasible path.  The OPT-CS algorithm requires information on channel availability and packet loss rates at the links of feasible paths.  The OPT-PS algorithm computes the primal variable $y_l^h$ for each path and broadcasts Lagrangian multipliers over the control channel to all the source nodes.  We assume a perfect control channel such that channel information can be effectively distributed and shared, which is not confined by the time slot structure~\cite{Su08}.

We assume relatively large timescales for the primary network time slots, and small to medium diameter for the CR network, such that there is sufficient time for timely feedback of channel information to the video source nodes and for the distributed algorithms to converge.  Otherwise, channel information can be estimated using (\ref{eq:OneStep}) based on delayed feedback, leading to suboptimal solutions.  If the time slot is too short, the distributed algorithm may not converge to the optimal solution (see Fig.~\ref{fig:timeslot}).  We focus on developing the CR video framework in this paper, and will investigate these issues in our future work.

%--------------------------------------------------------------
\subsection{Simulation Results \label{sec:sim1}}
%--------------------------------------------------------------

%--------------------------------------------------------------
\subsubsection{Methodology and Simulation Settings} 

We implement the proposed algorithms with a combination of C and MATLAB (i.e., for solving the relaxed NLP problems), and evaluate their performance with simulations.  For the results reported in this section, we have $K=3$ primary networks and $M=10$ channels.  There are 56, 55, and 62 CR users in the coverage areas of primary networks 1, 2, and 3, respectively.  
The $\left| \mathcal{U}_m^1 \right|$'s are [5 4 6 4 8 7 5 6 7 4] (i.e., five users sense channel 1, four users sense channel 2, and so forth); the $\left| \mathcal{U}_m^2 \right|$'s are [4 6 5 7 6 5 3 8 5 6], and the $\left| \mathcal{U}_m^3 \right|$'s are [8 6 5 4 7 6 8 5 6 7]. The  topology is shown in Fig.~\ref{fig:topology}.  

%--------------------------------------------------------------
\begin{figure}[!t]
  \centering  
  \includegraphics [width=3.7in]{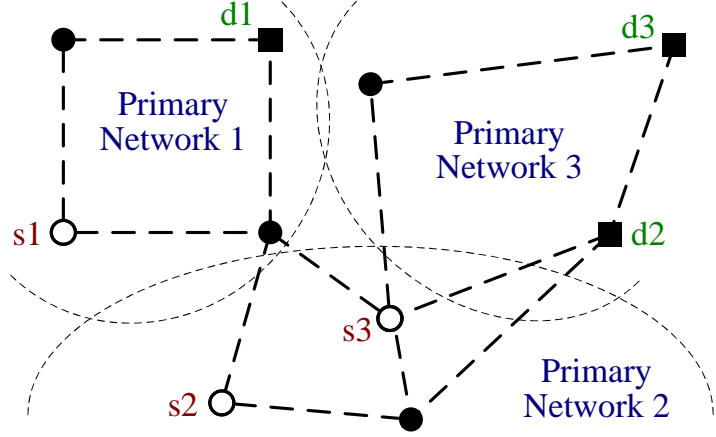}
  \caption{Topology of the multi-hop CR network. Note that only video source nodes, video destination nodes, and those nodes along the precomputed paths are shown in the topology. }
  \label{fig:topology}
\end{figure}
%--------------------------------------------------------------

We choose $L_p=100$, $T_s=0.02$ and $N_G=10$. The channel utilization is $\eta_m^k=0.6$ for all the channels. The probability of false alarm is $\epsilon_m^k=0.3$ and the probability of miss detection is $\delta_m^k=0.2$ for all $m$ and $k$, unless otherwise specified. Channel parameters $\lambda_m^k$ and $\mu_m^k$ are set between $(0,1)$. The maximum allowed collision probability $\gamma_m^k$ is set to $0.2$ for all the $M$ channels in the three primary networks.

We consider three video sessions, each streaming a video in the Common Intermediate Format (CIF, $352 \times 288$), i.e., {\em Bus} to destination 1, {\em Foreman} to destination 2, and {\em Mother \& Daughter} to destination 3.  The frame rate is 30 fps, and a GOP consists of 10 frames.  We assume that the duration of a time slot is 0.02 seconds and each GOP should be delivered in 0.2 seconds (i.e., 10 time slots).

We compare four schemes in the simulations: (i) the upper-bounding solution by solving the relaxed version of Problem OPT-CRV using an NLP solver, (ii) the proposed distributed algorithm in Tables~\ref{tab:ChanSchedAlgo} and~\ref{tab:PathSelecAlgo}, (iii) the sequential fixing algorithm given in Table~\ref{tab:CentrialAlgo}, which computes a lower-bounding solution, and (iv) a greedy heuristic where at each hop, the link with the most available channels is used.  Each point in the figures is the average of 10 simulation runs, with $95\%$ confidence intervals plotted as error bars in the figures.  The 95\% confidence intervals are negligible in all the figures.

%--------------------------------------------------------------
\subsubsection{Simulation Results} 

%--------------------------------------------------------------
\paragraph{Algorithm Performance} 

To demonstrate the convergence of the distributed algorithm, we plot the traces of the four Lagrangian multipliers in Fig.~\ref{fig:convergence}.  We observe that all the Lagrangian multipliers converge to their optimal values after 76 iterations.  We also plot the control overhead as measured by the number of distinct broadcast messages for $e_i(\tau)$ using the y-axis on the right-hand side.  The overhead curve increases linearly with the number of iterations and gets flat (i.e., no more broadcast message) when all the Lagrangian multipliers converge to their optimal values. 

We examine the impact of spectrum sensing errors in Fig.~\ref{fig:psnr-error}.  We test six sensing error combinations $\{\epsilon_m ,\delta_m\}$ as follows: \{0.1, 0.5\}, \{0.2, 0.3\}, \{0.3, 0.2\}, \{0.5, 0.11\}, \{0.7, 0.06\}, and \{0.9, 0.02\}, and plot the average PSNR values of the Foreman session. It is interesting to see that the best video quality is achieved when the false alarm probability $\epsilon_m$ is between $0.2$ and $0.3$. Since the two error probabilities are correlated, increasing one will generally decrease the other.  With a larger $\epsilon_m$, CR users are more likely to waste spectrum opportunities that are actually available, leading to lower bandwidth for videos and poorer video quality, as shown in Fig.~\ref{fig:psnr-error}.  On the other hand, a larger $\delta_m$ implies more aggressive spectrum access and more severe interference to primary users.  Therefore when $\epsilon_m$ is lower than 0.2 (and $\delta_m$ is higher than 0.3), the CR nodes themselves also suffer from the collisions and the video quality degrades. 

%--------------------------------------------------------------
\begin{figure}[!t]
  \centering  
  \includegraphics [width=5.0in, height=3.3in]{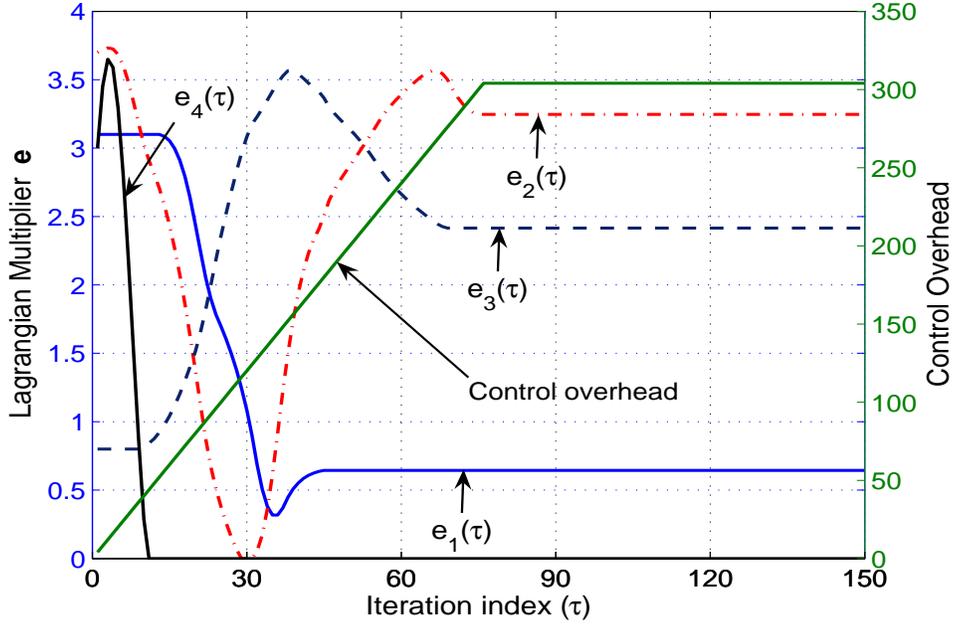}
  \caption{Illustrate the convergence of the distributed algorithm.}
  \label{fig:convergence}
\end{figure}
%-------------------------------------------------------------

%-------------------------------------------------------------
\begin{figure}[!t]
  \centering  
  \includegraphics [width=4.5in, height=3.0in]{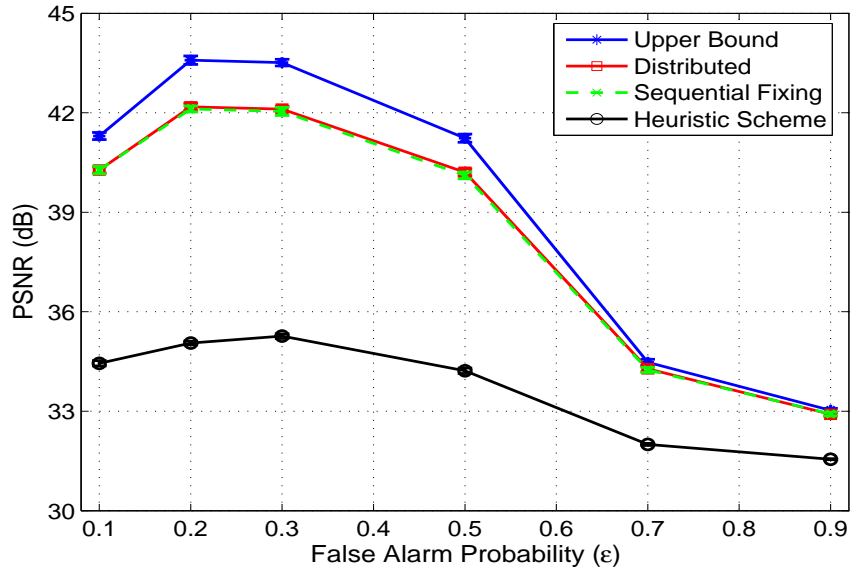}
  \caption{Video PSNRs versus spectrum sensing error.}
  \label{fig:psnr-error}
\end{figure}
%-------------------------------------------------------------

%--------------------------------------------------------------
\paragraph{Impact of Primary Network Parameters} 

In Fig.~\ref{fig:psnr-eta}, we examine the impact of channel utilization $\eta$ on received video quality.  We focus on Session 2 with the Foreman sequence. The average PSNRs achieved by the four schemes are plotted when $\eta$ is increased from $0.6$ to $0.9$ for all licensed channels. Intuitively, a smaller $\eta$ allows more transmission opportunities for CR nodes, leading to improved video quality. This is illustrated in the figure where all the four curves decrease as $\eta$ gets larger.  The distributed scheme achieves PSNRs very close to that obtained by sequential fixing, and both of them are close to the upper bound.  The heuristic scheme is inefficient in exploiting the available spectrum even when the channel utilization is low.  
As discussed, the time slot duration is also an important parameter that may affect the convergence of the distributed algorithm.  In Fig.~\ref{fig:timeslot}, we keep the same network and video session settings, while increasing the time slot duration as 4 ms, 10 ms, 20 ms, 40ms and 100 ms.  For a given time slot duration, we let the distributed algorithm run for 5\% of the time slot duration, starting from the beginning of the time slot, and then stop.  The solution that the algorithm produces when it is stopped will be used for video transmission in the remainder of this time slot.  It can be seen that when the time slot is 4 ms, the algorithm does not converge after 5\%$\times$4=0.2 ms, and the PSNR produced by the distributed algorithm is low (but still higher than that of the heuristic algorithm).  When the time slot duration is sufficiently large (e.g., over 10 ms), the algorithm can converge and the proposed algorithm produces very good video quality as compared to the upper bound and the lower bound given by the sequential fixing algorithm.   

%--------------------------------------------------------------
\begin{figure}[!t]
  \centering  
  \includegraphics [width=4.5in, height=3.0in]{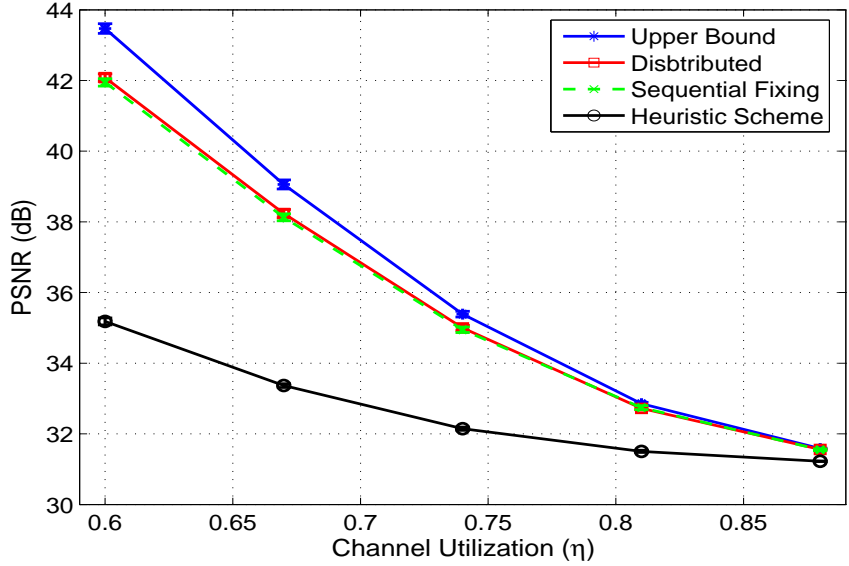}
  \caption{Video PSNRs versus primary user channel utilization $\eta$.}
  \label{fig:psnr-eta}
\end{figure}
%--------------------------------------------------------------

%--------------------------------------------------------------
\begin{figure}[!t]
  \centering  
  \includegraphics [width=4.5in, height=3.0in]{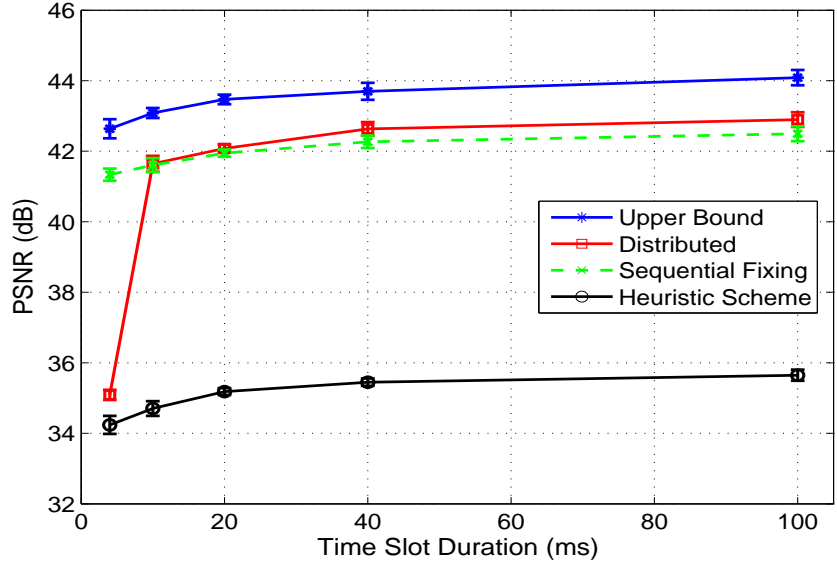}
  \caption{Impact of time slot duration on received video quality.}
  \label{fig:timeslot}
\end{figure}
%--------------------------------------------------------------

%--------------------------------------------------------------
\paragraph{Comparison of MPEG-4 FGS and H.264/SVC MGS Videos}

Finally, we compare MPEG-4 FGS and H.264/SVC MGS videos, while keeping the same settings.  It has been shown that H.264/SVC has better rate-distortion performance than MPEG-4 FGS due to the use of efficient hierarchical prediction structures, the inter-layer prediction mechanisms, improved drift control mechanism, and the efficient coding scheme in H.264/AVC~\cite{Wien07}.  Although MGS has Network Abstraction Layer (NAL) unit-based granularity, it achieves similar rate-distortion performance as H.264/SVC FGS~\cite{Wien07}. 

We plot the upper bounds and the distributed algorithm results in Figs.~\ref{fig:fgs-mgs-u} and~\ref{fig:fgs-mgs-error} for various channel utilizations and false alarm probabilities, respectively.  From the figures, it can be observed that there is a gap about 2.5 dB between the H.264/SVC MGS and MPEG-4 FGS curves, which clearly demonstrates the rate-distortion efficiency of MGS over MPEG-4 FGS.  The proposed algorithm can effectively handle both MGS and FGS videos, and the same trend is observed in both cases.

%-------------------------------------------------------------
\begin{figure}[!t]
  \centering  
  \includegraphics [width=4.5in, height=3.0in]{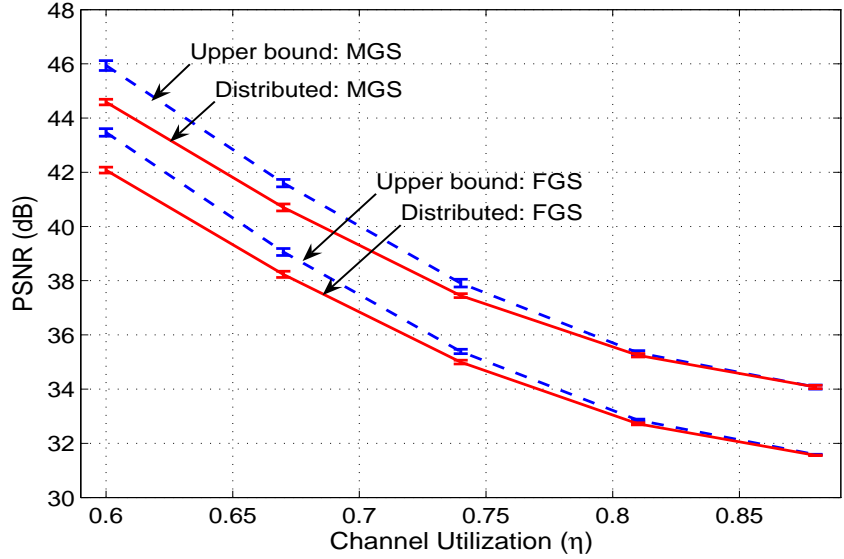}
  \caption{Comparison of MPEG-4 FGS video with H.264/SVC MGS video under various channel utilizations.}
  \label{fig:fgs-mgs-u}
\end{figure}
%-------------------------------------------------------------

%-------------------------------------------------------------
\begin{figure}[!t]
  \centering  
  \includegraphics [width=4.5in, height=3.0in]{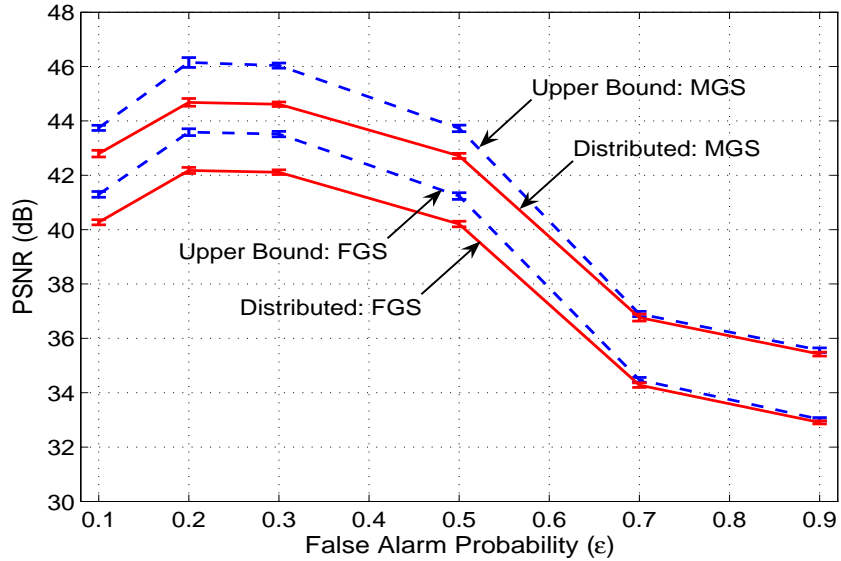}
  \caption{Comparison of MPEG-4 FGS video with H.264/SVC MGS video under various false alarm probabilities.}
  \label{fig:fgs-mgs-error}
\end{figure}
%-------------------------------------------------------------

%-------------------------------------------------------------------
% Conclusion
\section{Conclusions}\label{sec:cr_video_conc}
In this paper, we first addressed the problem of multicasting FGS video in CR networks. The problem formulation took video quality and proportional fairness as objectives, while considering cross-layer design factors such as FGS coding, spectrum sensing, opportunistic spectrum access, primary user protection, scheduling, error control and modulation. We proposed efficient optimization and scheduling algorithms for highly competitive solutions, and proved the complexity and optimality bound of the proposed greedy algorithm. Our simulation results demonstrate not only the viability of video over CR networks, but also the efficacy of the proposed approach. 

Then, we studied the challenging problem of streaming multiple scalable videos in a multi-hop CR network. The problem formulation considered spectrum sensing and sensing errors, spectrum access and primary user protection, video quality and fairness, and channel/path selection for concurrent video sessions.  We first solved the formulated MINLP problem using a sequential fixing scheme that produces lower and upper bounds on the achievable video quality. We then applied dual decomposition to derive a distributed algorithm, and analyzed its optimality and convergence performance. Our simulations validated the efficacy of the proposed scheme.

%-------------------------------------------------------------------
\bibliographystyle{IEEEtran}
\bibliography{cr_video_femto,MyWork}
%-------------------------------------------------------------------

\end{document}